\newtheorem{Def}{Definition}[section]
\newtheorem{Prp}[Def]{Proposition}
\begin{document}

\title{A Family of Horizon-penetrating Coordinate Systems for the Schwarzschild Black Hole Geometry with Cauchy Temporal Functions \vspace{0.1cm}}

\author{Christian R\"oken\let\thefootnote\relax\footnote{e-mail: croken@correo.ugr.es, croeken@uni-bonn.de}} 

\affiliation{Department of Geometry and Topology - Faculty of Science, University of Granada, 18071 Granada, Spain \vspace{0.25cm}}

\affiliation{Lichtenberg Group for History and Philosophy of Physics - Institute of Philosophy, University of Bonn, 53113 Bonn, Germany \vspace{0.25cm}}

\date{September 2020 / December 2025}

\begin{abstract}
\vspace{0.4cm} \noindent \textbf{\footnotesize ABSTRACT.} \, We introduce a new family of horizon-penetrating coordinate systems for the Schwarzschild black hole geometry featuring time coordinates that are Cauchy temporal functions for which the level sets are smooth, asymptotically flat, spacelike Cauchy hypersurfaces. Coordinate systems of this kind are well suited for the study of the temporal evolution of matter and radiation fields in the joined exterior and interior regions of the Schwarzschild black hole geometry, whereas the associated foliations can be employed as initial data sets for the globally hyperbolic development under the Einstein flow. For their construction, we formulate an explicit method that utilizes the geometry of---and structures inherent in---the Penrose diagram of the Schwarzschild black hole geometry, thus relying on the corresponding metrical product structure. As an example, we consider an integrated algebraic sigmoid function as the basis for the determination of such a coordinate system. Finally, we generalize our results to the Reissner--Nordstr\"om black hole geometry up to the Cauchy horizon. The geometric construction procedure presented here can be adapted to yield similar coordinate systems for various other spacetimes with the same metrical product structure.
\end{abstract}

\setcounter{tocdepth}{2}

\vspace{0.1cm}

\maketitle

\tableofcontents

\section{Introduction}

\noindent In a certain class of Lorentzian $4$-manifolds, there exist preferred $2$-dimensional, totally geodesic submanifolds with induced metrics that are locally conformally equivalent to the $2$-dimensional Minkowski metric, comprising central information on their global 4-dimensional geometries. Accordingly, these submanifolds may be used to faithfully represent and analyze the causal structures of the underlying Lorentzian $4$-manifolds. For first applications of this approach to the Schwarzschild, Reissner--Nordstr\"om, and Kerr geometries, we refer the reader to, e.g., \cite{Carter1, Carter2, Fink, GraBri, KRUS}. As can be seen, i.a., from these first applications, one of the most prominent examples of such preferred $2$-dimensional submanifolds are the $2$-surfaces containing the two double principal null directions in a Petrov type D solution of the vacuum Einstein field equations in general relativity \cite{Step, GrPo}, which can be employed to study the causal structures of black hole geometries, for instance, by means of Penrose diagrams \cite{CrOlSz, SchAgu}. In particular, this concept of analyzing the causal structures of Lorentzian $4$-manifolds is especially useful in the context of global hyperbolicity, which is a specific causality condition on Lorentzian manifolds that gives rise to foliations by smooth, spacelike Cauchy hypersurfaces. Thus, it is relevant for the initial value formulation of the Einstein field equations (see, e.g., \cite{ICB}), where one works with spacelike Cauchy hypersurfaces as initial data sets and derives solutions evolving this data forward and backward in time. Examples of such foliations of the maximal globally hyperbolic extensions of some spherically symmetric Lorentzian $4$-manifolds by spacelike Cauchy hypersurfaces that are maximal or have constant mean curvature can be found in \cite{BeigMurchadha, BrillCavalloIsenberg, EardleySmarr, EWCDST, ReimannBruegmann, Reinhart, SmarrYork, WaldIyer}.

In this work, we focus on an explicit construction procedure of global coordinate systems for a specific globally hyperbolic subset of the family of spherically symmetric vacuum geometries, namely the Schwarzschild black hole geometry, which are related to foliations of this geometry by smooth, asymptotically flat, spacelike Cauchy hypersurfaces. (The Schwarzschild black hole geometry is isometric to a subset of the maximally extended Schwarzschild geometry, viz.\ its regions I and II \cite{Wald}, and may be used in order to describe the final equilibrium state in the time evolution of the gravitational field of an isolated, nonrotating, uncharged black hole.) More precisely, we present a $2$-dimensional geometric construction procedure of a new family of horizon-penetrating coordinate systems with Cauchy temporal functions (Cauchy coordinates) covering the joined exterior and interior regions of the Schwarzschild black hole geometry, in which we deform the geometric shape of the corresponding Penrose diagram from a trapezoid into a centrally symmetric diamond via affine as well as homotopy transformations, and formulate conditions for the determination of families of smooth functions foliating this diamond. These functions are identified with smooth, spacelike Cauchy hypersurfaces in the Schwarzschild black hole geometry, which are asymptotically flat at spacelike infinity, encounter the curvature singularity only asymptotically, and yield regular foliations across the event horizon. Hence, the labels of these hypersurfaces are Cauchy temporal functions on the Schwarzschild black hole geometry, and may serve as time variables of the aforementioned global coordinate systems. (For the study of other families of horizon-penetrating coordinate systems related to foliations with similar boundary conditions and spatial slices with trumpet geometry, which, however, rely on different geometric construction procedures and do not, in general, yield foliations that cover the entire Schwarzschild black hole geometry up to the singularity, see \cite{DennisonBaumgarte, HHOBM}.) Having a coordinate system of this type at one's disposal may be advantageous in the derivation of propagators for matter and radiation fields in a Schwarzschild black hole background geometry in the framework of (relativistic) quantum theory. Moreover, the foliations associated with these coordinate systems can be used as initial data sets for the globally hyperbolic development of the Schwarzschild black hole geometry under the Einstein flow, tracing its evolution over time. 

The paper is organized as follows. In Section \ref{prel}, we first recall the main geometrical and topological aspects of the Schwarzschild black hole geometry, present a derivation of compactified Kruskal--Szekeres coordinates, and study the corresponding Penrose diagram. We then give a brief account of the notions of Cauchy surfaces and time-type functions. Subsequently, in Section \ref{III}, we introduce our geometric method for the explicit construction of horizon-penetrating Cauchy coordinate systems for the Schwarzschild black hole geometry. We also prove that the level sets of the time variables of these coordinate systems are in fact Cauchy hypersurfaces. The details of a specific example based on an integrated algebraic sigmoid function are worked out in Section \ref{IV}. In Section \ref{V}, we generalize our results to the Reissner--Nordstr\"om black hole geometry up to the Cauchy horizon. Finally, we conclude with a brief outlook on future research projects in Section \ref{VII}.

\section{Preliminaries} \label{prel}

\subsection{The Schwarzschild Black Hole Geometry and Compactified Kruskal--Szekeres Coordinates} \label{IA}

\noindent The Schwarzschild black hole geometry $(\mathfrak{M}, \boldsymbol{g})$ is a connected, smooth, globally hyperbolic and asymptotically flat Lorentzian $4$-manifold with $\mathfrak{M}$ being homeomorphic to $\mathbb{R}^2 \times S^2$ and a spherically symmetric metric $\boldsymbol{g}$, referred to as the Schwarzschild metric, which constitutes a $1$-parameter family of solutions of the vacuum Einstein field equations $\textnormal{Ric}(\boldsymbol{g}) = \boldsymbol{0}$. In the standard Schwarzschild coordinates $(t, r, \theta, \varphi) \in \mathbb{R} \times \mathbb{R}_{> 0} \times (0, \pi) \times [0, 2 \pi)$, this metric takes the form \cite{Schwarzschild}
\begin{equation} \label{BLmetric}
\boldsymbol{g} = \biggl[1 - \frac{2 M}{r}\biggr] \textnormal{d}t \otimes \textnormal{d}t - \biggl[1 - \frac{2 M}{r}\biggr]^{- 1} \textnormal{d}r \otimes \textnormal{d}r - r^2 \, \boldsymbol{g}_{S^2} \, ,
\end{equation}
where the parameter $M \in \mathbb{R}_{> 0}$ coincides with the ADM mass of the black hole geometry and $\boldsymbol{g}_{S^2} = \textnormal{d}\theta \otimes \textnormal{d}\theta + \sin^2{(\theta)} \, \textnormal{d}\varphi \otimes \textnormal{d}\varphi$ is the metric on the unit $2$-sphere. This representation of the Schwarzschild metric is well-defined for all $r \in \mathbb{R}_{> 0} \backslash \{2 M\}$ and features two types of singularities, namely a spacelike curvature singularity at $r = 0$ and a coordinate singularity at $r = 2 M$, with the latter being the location of the event horizon $\mathfrak{M} \cap \partial J^-(\mathscr{I}^+)$, that is, the boundary of the causal past of future null infinity. The Schwarzschild black hole geometry may thus be separated into two connected components: the component $\textnormal{B}_{\textnormal{I}} := \mathbb{R} \times \mathbb{R}_{> 2 M} \times S^2$, which is the domain of outer communication, and the component $\textnormal{B}_{\textnormal{II}} := \mathbb{R} \times (0, 2 M) \times S^2$, which is the future trapped region or black hole region $\mathfrak{M} \backslash J^-(\mathscr{I}^+) \not= \emptyset$. We remark that on $\textnormal{B}_{\textnormal{I}}$, the Schwarzschild time coordinate $t$ is a Cauchy temporal function, i.e., it yields a foliation of this region by smooth, spacelike Cauchy hypersurfaces (see Section \ref{IC}). However, due to the degeneracy of the Schwarzschild coordinates at---and the violation of the staticity of the Schwarzschild metric across---the event horizon, the level sets of $t$ do not foliate the entire Schwarzschild black hole geometry.

We next recall the usual derivation of compactified Kruskal--Szekeres coordinates, which we restrict, for the purposes of the present work, to the region $\textnormal{B}_{\textnormal{I}} \cup \textnormal{B}_{\textnormal{II}}$. These coordinates are single-valued and regular for all values of $r \in \mathbb{R}_{> 0}$, locate the event horizon at finite coordinate values, and give rise to a compactification of the total black hole geometry (required for the construction of Penrose diagrams). We begin by transforming the above Schwarzschild coordinates into Eddington--Finkelstein double-null coordinates \cite{Edd, Fink}
\begin{equation*} 
\mathfrak{T}^{\textnormal{EF}} \colon
\begin{cases}
\, \mathbb{R} \times \mathbb{R}_{> 0} \times (0, \pi) \times [0, 2 \pi) \rightarrow \mathbb{R} \times \mathbb{R} \times (0, \pi) \times [0, 2 \pi) \vspace{0.3cm} \\
\hspace{2.573cm} (t, r, \theta, \varphi) \mapsto (u, v, \theta', \varphi') 
\end{cases} 
\end{equation*}
with 
\begin{equation*} 
\left\{\!\begin{aligned}
& \, u = t - r_{\star} & & \textnormal{and} \quad v = t + r_{\star} & & \, \textnormal{for} \,\,\, \textnormal{B}_{\textnormal{I}} \\[0.1cm] 
& \, u = t + r_{\star} & & \textnormal{and} \quad v = - t + r_{\star} & & \, \textnormal{for} \,\,\, \textnormal{B}_{\textnormal{II}} 
\end{aligned}\right\} \, , \quad \theta' = \theta \, , \quad \textnormal{and} \quad \varphi' = \varphi \, ,
\end{equation*}
where
\begin{equation*} 
r_{\star} := r + 2 M \, \ln{\bigg|\frac{r}{2 M} - 1\bigg|} \in \begin{cases}
\, \mathbb{R} & \, \textnormal{for} \,\,\, \textnormal{B}_{\textnormal{I}} \\ 
\, \mathbb{R}_{< 0} & \, \textnormal{for} \,\,\, \textnormal{B}_{\textnormal{II}}
\end{cases}
\end{equation*}
is the Regge--Wheeler coordinate and $v + u \in \mathbb{R}_{< 0}$ for $\textnormal{B}_{\textnormal{II}}$. The Schwarzschild metric in Eddington--Finkelstein double-null coordinates reads
\begin{equation*}
\boldsymbol{g} = \frac{1}{2} \, \bigg|1 - \frac{2 M}{r}\bigg| \, (\textnormal{d}u \otimes \textnormal{d}v + \textnormal{d}v \otimes \textnormal{d}u) - r^2 \, \boldsymbol{g}_{S^2} \, .
\end{equation*}
We then apply the transformation into Kruskal--Szekeres double-null coordinates 
\begin{equation*} 
\mathfrak{T}^{\textnormal{KS1}} \colon
\begin{cases}
\, \displaystyle \mathbb{R} \times \mathbb{R} \times (0, \pi) \times [0, 2 \pi) \rightarrow \displaystyle \biggl(- \frac{\pi}{2}, \frac{\pi}{2}\biggr) \times \biggl(0, \frac{\pi}{2}\biggr) \times (0, \pi) \times [0, 2 \pi) \vspace{0.3cm} \\
\hspace{2.08cm} (u, v, \theta, \varphi) \mapsto (U, V, \theta', \varphi')
\end{cases} 
\end{equation*}
with 
\begin{equation*} 
\left\{\!\begin{aligned}
& \, \tan{(U)} = - e^{- u/(4 M)} & & \textnormal{and} \quad \tan{(V)} = e^{v/(4 M)} & & \, \textnormal{for} \,\,\, \textnormal{B}_{\textnormal{I}} \\[0.1cm] 
& \, \tan{(U)} = e^{u/(4 M)} & & \textnormal{and} \quad \tan{(V)} = e^{v/(4 M)} & & \, \textnormal{for} \,\,\, \textnormal{B}_{\textnormal{II}} 
\end{aligned}\right\} \, , \quad \theta' = \theta \, , \quad \textnormal{and} \quad \varphi' = \varphi \, ,
\end{equation*}
where $U \in (- \pi/2, 0)$ and $V \in (0, \pi/2)$ for $\textnormal{B}_{\textnormal{I}}$ and $U \in (0, \pi/2 - V)$ and $V \in (0, \pi/2)$ for $\textnormal{B}_{\textnormal{II}}$. Finally, we transform the Kruskal--Szekeres double-null coordinates into a compactified form of the usual Kruskal--Szekeres spacetime coordinates \cite{KRUS, Szekeres} 
\begin{equation*} 
\mathfrak{T}^{\textnormal{KS2}} \colon
\begin{cases}
\, \displaystyle \biggl(- \frac{\pi}{2}, \frac{\pi}{2}\biggr) \times \biggl(0, \frac{\pi}{2}\biggr) \times (0, \pi) \times [0, 2 \pi)  \rightarrow \displaystyle \biggl(- \frac{\pi}{4}, \frac{\pi}{4}\biggr) \times \biggl(- \frac{\pi}{4}, \frac{\pi}{2}\biggr) \times (0, \pi) \times [0, 2 \pi) \vspace{0.3cm} \\
\hspace{4.18cm} (U, V, \theta, \varphi) \mapsto (T, X, \theta', \varphi') 
\end{cases} 
\end{equation*}
with 
\begin{equation*} 
T = \frac{U + V}{2} \, , \quad X = \frac{- U + V}{2} \, , \quad \theta' = \theta \, , \quad \textnormal{and} \quad \varphi' = \varphi \quad \textnormal{for} \,\,\, \textnormal{B}_{\textnormal{I}} \cup \textnormal{B}_{\textnormal{II}} \, ,
\end{equation*}
where $T \in \bigl(|X - \pi/4| - \pi/4, - |X - \pi/4| + \pi/4\bigr)$ and $X \in (0, \pi/2)$ for $\textnormal{B}_{\textnormal{I}}$ and $T \in \bigl(|X|, \pi/4\bigr)$ and $X \in (- \pi/4, \pi/4)$ for $\textnormal{B}_{\textnormal{II}}$. Using these coordinates, the Schwarzschild metric can be represented as
\begin{equation*} 
\boldsymbol{g} = \frac{32 M^3 \, e^{- r/(2 M)}}{\bigl[\cos^2(T) - \sin^2(X)\bigr]^2 \, r} \, (\textnormal{d}T \otimes \textnormal{d}T - \textnormal{d}X \otimes \textnormal{d}X) - r^2 \, \boldsymbol{g}_{S^2} \, .
\end{equation*}
We note in passing that the compactified Kruskal--Szekeres time coordinate $T$ is a temporal function on $\textnormal{B}_{\textnormal{I}} \cup \textnormal{B}_{\textnormal{II}}$ (cf.\ Definition \ref{DefTTF} in Section \ref{IC}). Furthermore, even though compactified Kruskal--Szekeres spacetime coordinates are more general than required, they are---and yield representations of geometric quantities that are---nevertheless still fairly simple and easy to handle. However, if desired, one may as well work with different types of compactified horizon-penetrating coordinate systems derived from, e.g., Gullstrand--Painlev\'e coordinates, Lema\^{i}tre coordinates, or advanced Eddington--Finkelstein coordinates \cite{Gull, Pain, Lem}.

\subsection{Penrose Diagram of the Schwarzschild Black Hole Geometry}

\noindent Since the Schwarzschild black hole geometry $(\mathfrak{M}, \boldsymbol{g})$ consists of a product space $\mathfrak{M} = \mathfrak{M}^{(2)}_{\textnormal{L}} \times \mathfrak{M}^{(2)}_{\textnormal{R}}$, where the $2$-dimensional Lorentzian component is isomorphic to $\mathfrak{M}^{(2)}_{\textnormal{L}} \cong \mathfrak{M} \slash \textnormal{SO}(3) \cong \mathbb{R}^2$ and the $2$-dimensional Riemannian component to $\mathfrak{M}^{(2)}_{\textnormal{R}} \cong S^2$, and is furthermore endowed with a metric of the form
\begin{equation} \label{prodstr}
\boldsymbol{g} = \boldsymbol{g}^{(2)}_{\textnormal{L}} \oplus \bigl(f \hspace{0.02cm} \boldsymbol{g}^{(2)}_{\textnormal{R}}\bigr) \, ,
\end{equation}
in which $\boldsymbol{g}^{(2)}_{\textnormal{L}}$ and $\boldsymbol{g}^{(2)}_{\textnormal{R}}$ are $2$-dimensional Lorentzian and Riemannian metrics, respectively, and $f \colon \mathfrak{M}^{(2)}_{\textnormal{L}} \rightarrow \mathbb{R}_{> 0}$ is a smooth function (see, e.g., \cite{ONeill}), it allows for the natural identification 
\begin{equation*}
T\bigl(\mathfrak{M}^{(2)}_{\textnormal{L}} \times \mathfrak{M}^{(2)}_{\textnormal{R}}\bigr) = T\mathfrak{M}^{(2)}_{\textnormal{L}} \oplus T\mathfrak{M}^{(2)}_{\textnormal{R}} 
\end{equation*}
and, therefore, the splitting
\begin{equation*}
\boldsymbol{g}(\boldsymbol{Y}_{\textnormal{L}} + \boldsymbol{Y}_{\textnormal{R}}, \boldsymbol{Z}_{\textnormal{L}} + \boldsymbol{Z}_{\textnormal{R}}) = \boldsymbol{g}^{(2)}_{\textnormal{L}}(\boldsymbol{Y}_{\textnormal{L}}, \boldsymbol{Z}_{\textnormal{L}}) + f \hspace{0.02cm} \boldsymbol{g}^{(2)}_{\textnormal{R}}(\boldsymbol{Y}_{\textnormal{R}}, \boldsymbol{Z}_{\textnormal{R}}) 
\end{equation*}
for $\boldsymbol{Y}_k, \boldsymbol{Z}_k \in \Gamma\bigl(T\mathfrak{M}^{(2)}_k\bigr)$, $k \in \{\textnormal{L}, \textnormal{R}\}$. Accordingly, every null geodesic in the $2$-dimensional Lorentzian submanifold $(\mathfrak{M}^{(2)}_{\textnormal{L}}, \boldsymbol{g}^{(2)}_{\textnormal{L}})$ is also a null geodesic in the full $4$-dimensional Schwarzschild black hole geometry $(\mathfrak{M}, \boldsymbol{g})$ for a fixed Riemannian submanifold $(\mathfrak{M}^{(2)}_{\textnormal{R}}, \boldsymbol{g}^{(2)}_{\textnormal{R}})$, making the $2$-dimensional Lorentzian submanifold totally geodesic. Radial causal relations between different points in $(\mathfrak{M}, \boldsymbol{g})$ can thus be simply analyzed by means of $(\mathfrak{M}^{(2)}_{\textnormal{L}}, \boldsymbol{g}^{(2)}_{\textnormal{L}})$, in particular by using a Penrose diagram \cite{Penrose0, Penrose1, Carter2, Walker}, where the metric $\boldsymbol{g}^{(2)}_{\textnormal{L}}$ on this $2$-dimensional, finite-sized diagram is locally conformally equivalent to the $2$-dimensional Minkowski metric and every point of the diagram corresponds to a $2$-sphere. For the construction of this Penrose diagram, we employ the relations 
\begin{equation*} 
\left\{\!\begin{aligned}
& \, \displaystyle \frac{\sin{(2 T)}}{\sin{(2 X)}} = \tanh{\bigl(t/(4 M)\bigr)} \hspace{0.635cm} \textnormal{and} \quad \frac{\cos{(2 T)}}{\cos{(2 X)}} = - \coth{\bigl(r_{\star}/(4 M)\bigr)} \, \hspace{0.39cm} \textnormal{for} \,\,\, \textnormal{B}_{\textnormal{I}} \\[0.15cm] 
& \, \displaystyle \frac{\sin{(2 T)}}{\sin{(2 X)}} = - \coth{\bigl(t/(4 M)\bigr)} \quad \textnormal{and} \quad \frac{\cos{(2 T)}}{\cos{(2 X)}} = - \tanh{\bigl(r_{\star}/(4 M)\bigr)} \, \hspace{0.35cm} \textnormal{for} \,\,\, \textnormal{B}_{\textnormal{II}} 
\end{aligned}\right\} 
\end{equation*}
between the compactified Kruskal--Szekeres time and radial coordinates and the Schwarzschild time  and Regge--Wheeler coordinates, which directly lead to the asymptotics shown in TABLE I. These asymptotics may be used to define the relevant structures of the Penrose diagram, namely future/past timelike infinity $i^{\pm} = (T = \pm \pi/4, X = \pi/4)$, future/past null infinity $\mathscr{I}^{\pm} = \bigl\{(T, X) \, \big| \, T = \pm (- X + \pi/2) \,\,\, \textnormal{and} \,\,\, \pi/4 < X < \pi/2\bigr\}$, spacelike infinity $i^0 = (T = 0, X = \pi/2)$, the event horizon at $\bigl\{(T, X) \, \big| \, T = X \,\,\, \textnormal{and} \,\,\, 0 \leq X \leq \pi/4\bigr\}$, and the location of the curvature singularity at $\bigl\{(T, X) \, \big| \, T = \pi/4 \,\,\, \textnormal{and} \,\,\, - \pi/4 < X < \pi/4\bigr\}$. We depict the Penrose diagram of the Schwarzschild black hole geometry $\textnormal{B}_{\textnormal{I}} \cup \textnormal{B}_{\textnormal{II}}$ in FIG.~\ref{CPDSchwarzschild}.

\begin{table}[h] \label{table1}
\caption{Asymptotic relations between the Kruskal--Szekeres and Schwarzschild time and radial coordinates.}
\begin{ruledtabular}
\begin{tabular}{llll}
\\[-0.2cm]
& \hspace{1.9cm} $r \rightarrow \infty$ & \hspace{1.7cm} $r \rightarrow 2 M$ & \hspace{1.3cm} $r \rightarrow 0$ \\ \\
$\textnormal{B}_{\textnormal{I}}$ & $T = \pm \, [X - \pi/2], X \in [\pi/4, \pi/2]$ & \hspace{0.5cm} $T = \pm X, X \in [0, \pi/4]$ & \hspace{1.55cm} --- \\ \\
$\textnormal{B}_{\textnormal{II}}$ & \hspace{2.2cm} --- & \hspace{0.5cm} $T = |X|, X \in [- \pi/4, \pi/4]$ & $T = \pi/4, X \in (- \pi/4, \pi/4)$ \\[0.2cm]
\hline \\[-0.2cm]
& \hspace{1.9cm} $t \rightarrow \infty$ & \hspace{1.6cm} $t \rightarrow - \infty$ & \\ \\
$\textnormal{B}_{\textnormal{I}}$ & $T = - [X - \pi/2], X \in [\pi/4, \pi/2]$ & \hspace{0.5cm} $T = [X - \pi/2], X \in [\pi/4, \pi/2]$ & \\ \\
& $T = X, X \in [0, \pi/4]$ & \hspace{0.5cm} $T = - X, X \in [0, \pi/4]$ & \\ \\
$\textnormal{B}_{\textnormal{II}}$ & $T = - X, X \in [- \pi/4, 0]$ & \hspace{0.5cm} $T = X, X \in [0, \pi/4]$ & \\[0.2cm]
\end{tabular}
\end{ruledtabular}
\end{table}
%
%

\begin{figure}[h]%
\vspace{0.5cm}
\centering
\includegraphics[width=0.45\columnwidth]{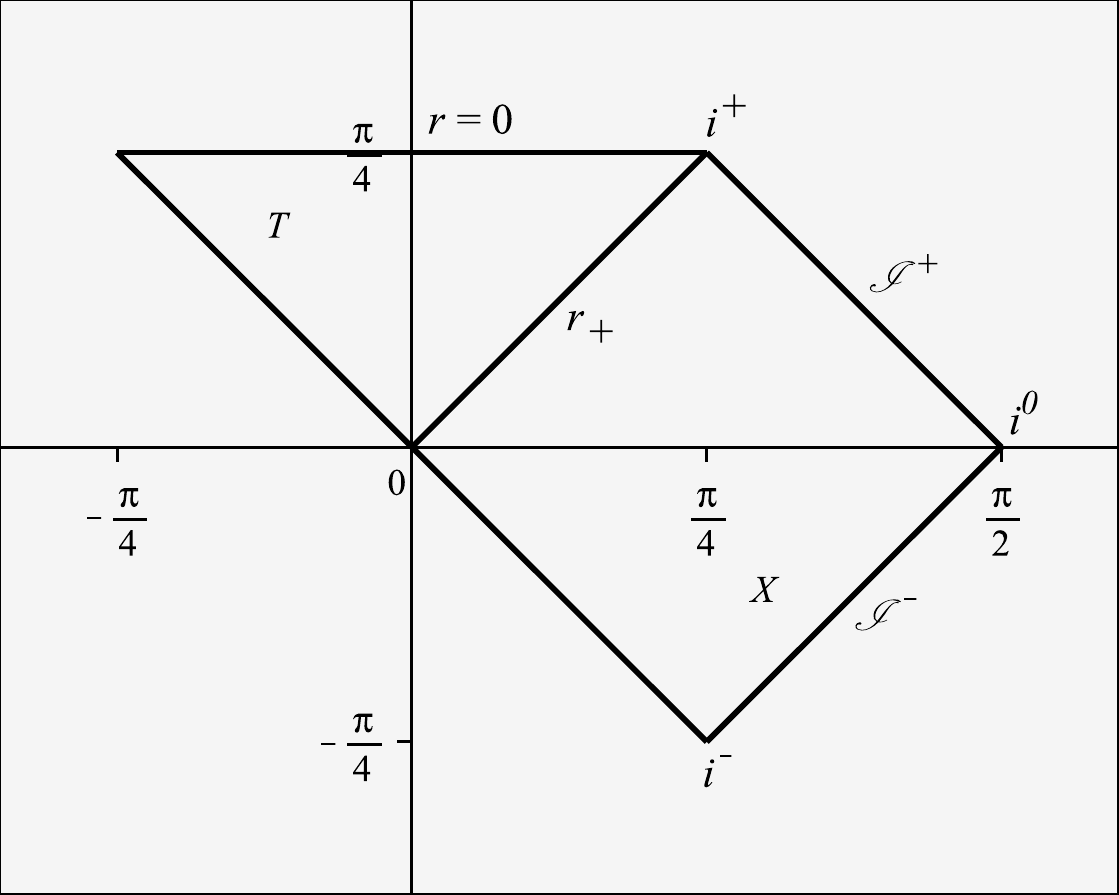}%
\caption[...]
{Penrose diagram of the exterior and interior regions of the Schwarzschild black hole geometry.}%
\label{CPDSchwarzschild}%
\end{figure}

\subsection{Cauchy Surfaces and Time-type Functions} \label{IC}

\noindent We now recall the concepts of Cauchy surfaces and time-type functions.
\begin{Def}
A \textit{Cauchy surface} of a connected, time-orientable Lorentzian manifold $(\mathfrak{M}, \boldsymbol{g})$ is any subset $\mathfrak{N} \subset \mathfrak{M}$ that is closed and achronal, and has the domain of dependence $D(\mathfrak{N}) = \mathfrak{M}$, i.e., it is intersected by every inextensible timelike curve exactly once. 
\end{Def}
\noindent A Cauchy surface is therefore a topological hypersurface \cite{ONeill}, which can be approximated by a smooth, spacelike hypersurface \cite{BernalSanchez2}. Moreover, if $(\mathfrak{M}, \boldsymbol{g})$ admits a Cauchy surface, it is globally hyperbolic \cite{RG2}. 
\begin{Def} \label{DefTTF}
We let $(\mathfrak{M}, \boldsymbol{g})$ be a connected, time-orientable Lorentzian manifold. A function $\mathfrak{t} \hspace{-0.07cm}: \mathfrak{M} \rightarrow \mathbb{R}$ is called a
\begin{itemize}
\item[1.] \textit{generalized time function} if it is strictly increasing on any future-directed causal curve.
\item[2.] \textit{time function} if it is a continuous generalized time function.
\item[3.] \textit{temporal function} if it is a smooth function with future-directed, timelike gradient $\boldsymbol{\nabla} \mathfrak{t} = g^{\mathfrak{t} \nu} \partial_{\nu}$.
\end{itemize}
\end{Def}
\noindent According to \cite{BernalSanchez1, RG}, there is the following relation between time-type functions and the notion of global hyperbolicity. 
\begin{Prp}
Any connected, time-orientable, globally hyperbolic Lorentzian manifold $(\mathfrak{M}, \boldsymbol{g})$ contains a \textit{Cauchy temporal function} $\mathfrak{t}$, that is, a temporal function for which the level sets $\mathfrak{t}^{- 1}(\, . \,)$ are smooth, spacelike Cauchy hypersurfaces $(\mathfrak{N}_{\mathfrak{t}})_{\mathfrak{t} \in \mathbb{R}}$ with $\mathfrak{N}_{\mathfrak{t}} := \{\mathfrak{t}\} \times \mathfrak{N}$ and $\mathfrak{N}_{\mathfrak{t}} \subset J^-(\mathfrak{N}_{\mathfrak{t}'})$ for all $\mathfrak{t} < \mathfrak{t}'$.
\end{Prp}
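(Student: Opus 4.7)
The plan is to proceed in two stages, first producing a continuous Cauchy time function via Geroch's volume construction and then smoothing it to a temporal function in the sense of Bernal and S\'anchez. Fix a smooth finite Borel measure $\mu$ on $\mathfrak{M}$ (such a measure exists by paracompactness and a partition of unity applied to the natural volume form of $\boldsymbol{g}$ divided by a suitable positive summable bump). Define $\mathfrak{t}^-(p) := \mu(J^-(p))$ and $\mathfrak{t}^+(p) := \mu(J^+(p))$. Global hyperbolicity of $(\mathfrak{M}, \boldsymbol{g})$ forces both functions to be continuous and strictly monotone along every future-directed causal curve, so that $\mathfrak{t}_0 := \ln(\mathfrak{t}^-/\mathfrak{t}^+)$ is a continuous Cauchy time function onto $\mathbb{R}$ whose level sets are topological Cauchy hypersurfaces.

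The second stage is the delicate one: I would promote $\mathfrak{t}_0$ to a smooth function with everywhere future-directed timelike gradient. Following Bernal and S\'anchez, the idea is to cover $\mathfrak{M}$ by relatively compact open sets $U_\alpha$ on which one can write down local smooth temporal functions $\tau_\alpha$ (for instance, time coordinates of local synchronous frames near each point), and then to glue the $\tau_\alpha$ by cutoff functions that are adapted to narrow causal neighborhoods of the level sets of $\mathfrak{t}_0$, rather than to an arbitrary auxiliary Riemannian structure. The key technical observation that makes the globalization work is that any locally finite convex combination of functions with future-directed timelike gradients is again a function of that type, since the future timelike cone at each point is convex. Composing the resulting smooth function with an appropriate monotone reparametrization of $\mathbb{R}$ — designed so that the new function agrees, in a coarse sense, with $\mathfrak{t}_0$ — yields a smooth $\mathfrak{t}: \mathfrak{M} \to \mathbb{R}$ with $\boldsymbol{\nabla}\mathfrak{t}$ timelike and future-directed, whose level sets are diffeomorphic to those of $\mathfrak{t}_0$.

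It remains to verify the properties of the level sets $\mathfrak{N}_{\mathfrak{t}} = \mathfrak{t}^{-1}(\mathfrak{t})$. Smoothness is automatic because $\mathfrak{t}$ is smooth with nowhere vanishing differential (its gradient is timelike, hence non-zero), so by the regular value theorem each $\mathfrak{N}_{\mathfrak{t}}$ is a smooth hypersurface. Spacelikeness follows because the unit normal to $\mathfrak{N}_{\mathfrak{t}}$ is parallel to $\boldsymbol{\nabla}\mathfrak{t}$, which is timelike, so the induced metric is Riemannian. The causal ordering $\mathfrak{N}_{\mathfrak{t}} \subset J^-(\mathfrak{N}_{\mathfrak{t}'})$ for $\mathfrak{t} < \mathfrak{t}'$ is a direct consequence of $\mathfrak{t}$ being strictly increasing along future-directed causal curves together with the Cauchy property: any inextensible future-directed causal curve through $p \in \mathfrak{N}_{\mathfrak{t}}$ sweeps the full range of $\mathfrak{t}$, hence meets $\mathfrak{N}_{\mathfrak{t}'}$, placing $p$ in $J^-(\mathfrak{N}_{\mathfrak{t}'})$. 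I expect the main obstacle to be the smoothing step: preserving the timelike character of the gradient under a partition-of-unity argument is not automatic, since naive mollification can wash out the causal information stored in the level sets, and the construction therefore must respect the causal geometry at each stage through a careful choice of cutoffs localized relative to the Geroch slicing.
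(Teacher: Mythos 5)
Your proposal is correct in outline and follows exactly the route the paper itself relies on: the paper states this Proposition without proof, citing Geroch's volume-function construction \cite{RG2} and the Bernal--S\'anchez smoothing theorem \cite{BernalSanchez2}, which are precisely your two stages. Your sketch faithfully reproduces that strategy and even correctly identifies the genuine technical crux (the gradient terms of the cutoff functions threatening the timelike character, which Bernal--S\'anchez handle with ``time step functions'' adapted to the Geroch slicing rather than a naive partition-of-unity convex combination), so nothing further is needed.
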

\noindent We remark that a coordinate system $(\mathfrak{t}, \boldsymbol{x})$ on $\mathfrak{M}$, where $\mathfrak{t} \in \mathbb{R}$ is a Cauchy temporal function and $\boldsymbol{x}$ are coordinates on $\mathfrak{N}$, may be understood as corresponding to an observer who is co-moving along the flow lines of the Killing field $\Gamma(T\mathfrak{M}) \ni \boldsymbol{K} = \partial_{\mathfrak{t}}$.

\section{Geometric Construction Procedure of Horizon-penetrating Cauchy Coordinates for the Schwarzschild Black Hole Geometry} \label{III}

\noindent We begin by simplifying the geometric shape of the Penrose diagram of the Schwarzschild black hole geometry $\textnormal{B}_{\textnormal{I}} \cup \textnormal{B}_{\textnormal{II}}$ transforming the trapezoid shown in FIG.\ \ref{Trafo0} into a centrally symmetric diamond as in FIG.\ \ref{Trafo5}. In more detail, we first rotate the trapezoid counter-clockwise about an angle of $\pi/4 \,\, \textnormal{rad}$ [FIG.\ \ref{Trafo0} $\rightarrow$ FIG.\ \ref{Trafo1}] employing the transformation
\begin{equation} \label{T1}
\mathfrak{T}^{(1)} \colon
\begin{cases}
\, \displaystyle \biggl(- \frac{\pi}{4}, \frac{\pi}{4}\biggr) \times \biggl(- \frac{\pi}{4}, \frac{\pi}{2}\biggr) \rightarrow \biggl(0, \frac{\pi}{2 \sqrt{2}}\biggr) \times \biggl(- \frac{\pi}{2 \sqrt{2}}, \frac{\pi}{2 \sqrt{2}}\biggr) \vspace{0.3cm} \\
\hspace{0.292cm} (T = T^{(0)}, X = X^{(0)}) \mapsto (T^{(1)}, X^{(1)}) 
\end{cases} 
\end{equation}
with 
\begin{equation*}
T^{(1)} = \frac{T^{(0)} + X^{(0)}}{\sqrt{2}} \quad \textnormal{and} \quad X^{(1)} = \frac{- T^{(0)} + X^{(0)}}{\sqrt{2}} \, ,
\end{equation*}
where $T^{(1)} < X^{(1)} + \pi/(2 \sqrt{2} \hspace{0.03cm} )$ for $- \pi/(2 \sqrt{2} \hspace{0.03cm} ) < X^{(1)} \leq 0$. We then deform the resulting trapezoid into a rectangle [FIG.\ \ref{Trafo1} $\rightarrow$ FIG.\ \ref{Trafo2}] by identifying the line 
\begin{equation*}
\bigl\{(T^{(1)}, X^{(1)}) \, \big| \, T^{(1)} = X^{(1)} + \pi/(2 \sqrt{2} \hspace{0.03cm} ) \,\,\,\, \textnormal{and} \,\,\, - \pi/(2 \sqrt{2} \hspace{0.03cm} ) \leq X^{(1)} \leq 0\bigr\}
\end{equation*}
with the line 
\begin{equation*}
\bigl\{(T^{(1)}, X^{(1)}) \, \big| \, 0 \leq T^{(1)} \leq \pi/(2 \sqrt{2} \hspace{0.03cm} ) \,\,\,\, \textnormal{and} \,\,\,\, X^{(1)} = - \pi/(2 \sqrt{2} \hspace{0.03cm} )\bigr\}
\end{equation*}
applying the transformation
\begin{equation} \label{T2}
\mathfrak{T}^{(2)} \colon
\begin{cases}
\, \displaystyle \biggl(0, \frac{\pi}{2 \sqrt{2}}\biggr) \times \biggl(- \frac{\pi}{2 \sqrt{2}}, \frac{\pi}{2 \sqrt{2}}\biggr) \rightarrow \biggl(0, \frac{\pi}{2 \sqrt{2}}\biggr) \times \biggl(- \frac{\pi}{2 \sqrt{2}}, \frac{\pi}{2 \sqrt{2}}\biggr) \vspace{0.3cm} \\
\hspace{2.70cm} (T^{(1)}, X^{(1)}) \mapsto (T^{(2)}, X^{(2)}) 
\end{cases} 
\end{equation}
with
\begin{equation*}
T^{(2)} = T^{(1)} \quad \textnormal{and} \quad X^{(2)} = \frac{T^{(1)}/2 - X^{(1)}}{T^{(1)} \sqrt{2}/\pi - 1} \, .
\end{equation*}

\begin{figure}[t]%
\centering
\subfigure[][]{%
\label{Trafo0}%
\includegraphics[width=0.3\columnwidth]{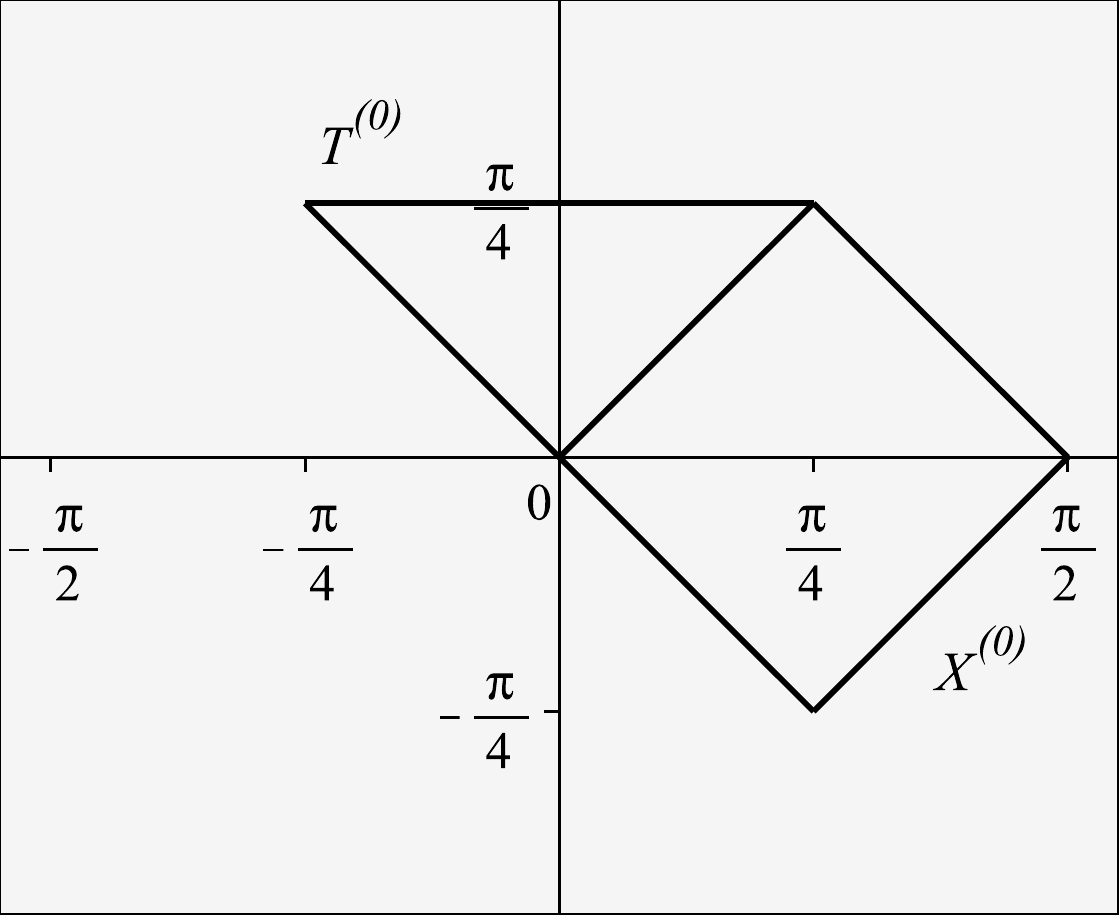}}%
\hspace{12pt}%
\subfigure[][]{%
\label{Trafo1}%
\includegraphics[width=0.3\columnwidth]{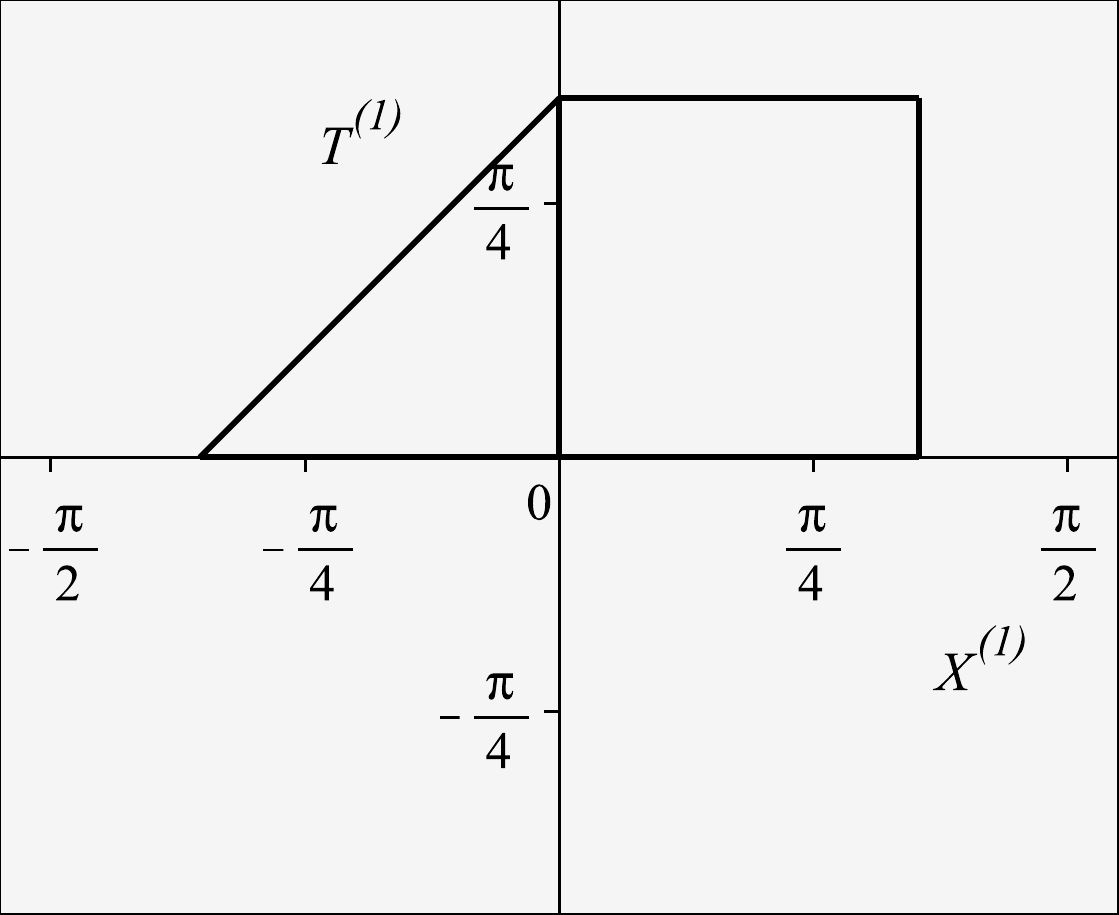}}%
\hspace{12pt}%
\subfigure[][]{%
\label{Trafo2}%
\includegraphics[width=0.3\columnwidth]{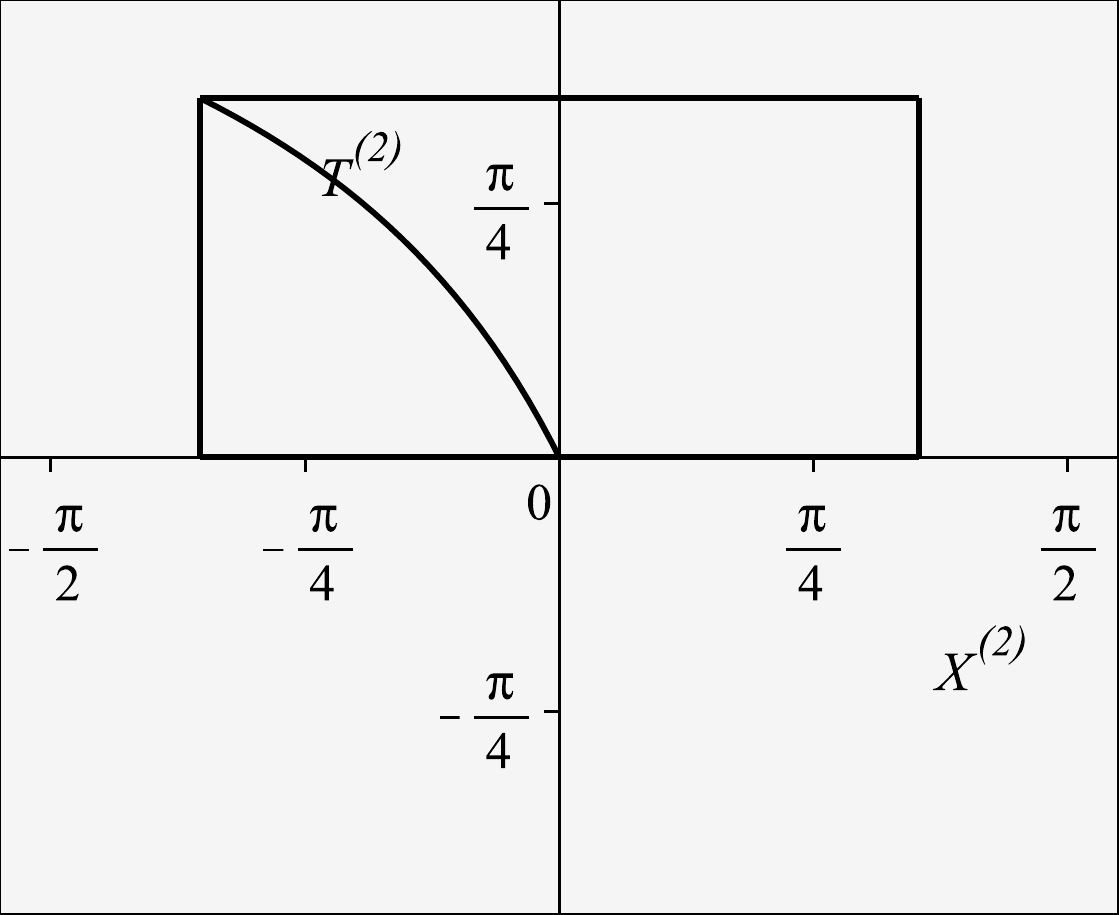}}\\[0.1cm]
\subfigure[][]{%
\label{Trafo3}%
\includegraphics[width=0.3\columnwidth]{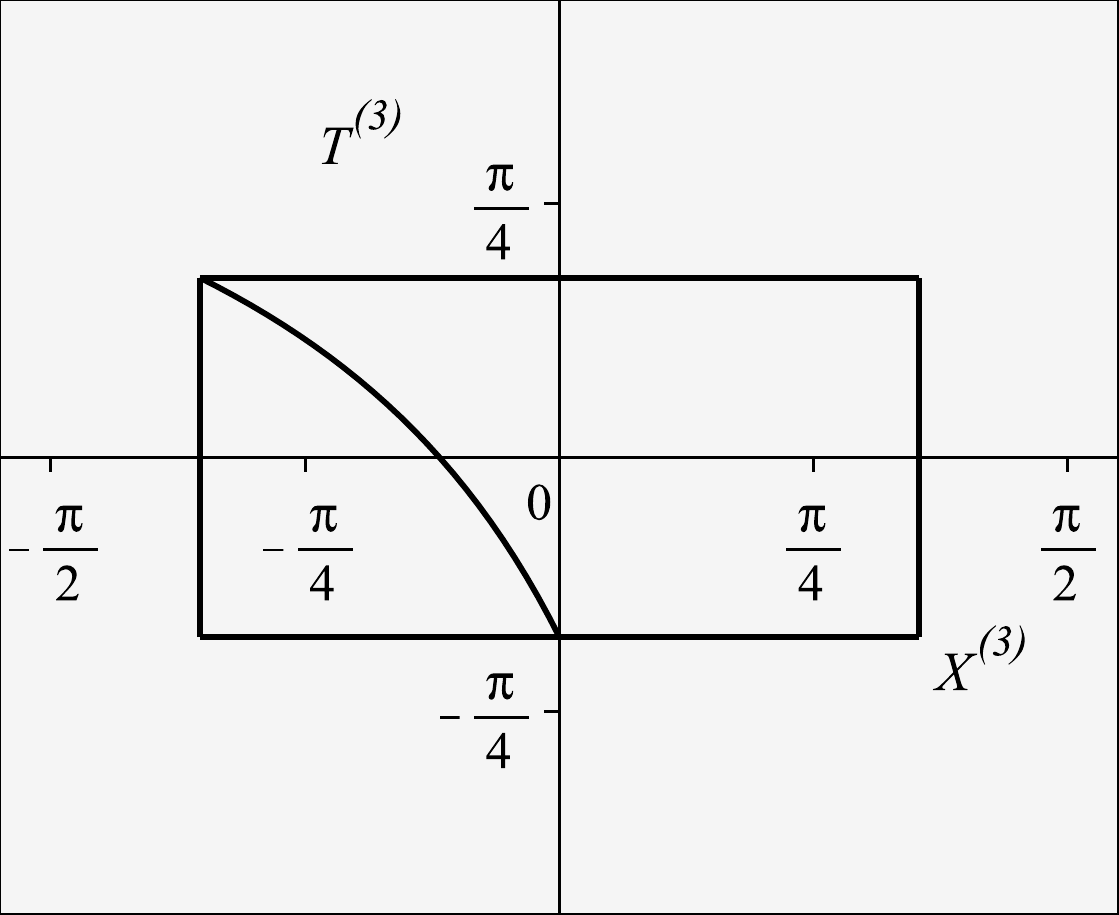}}%
\hspace{12pt}%
\subfigure[][]{%
\label{Trafo4}%
\includegraphics[width=0.3\columnwidth]{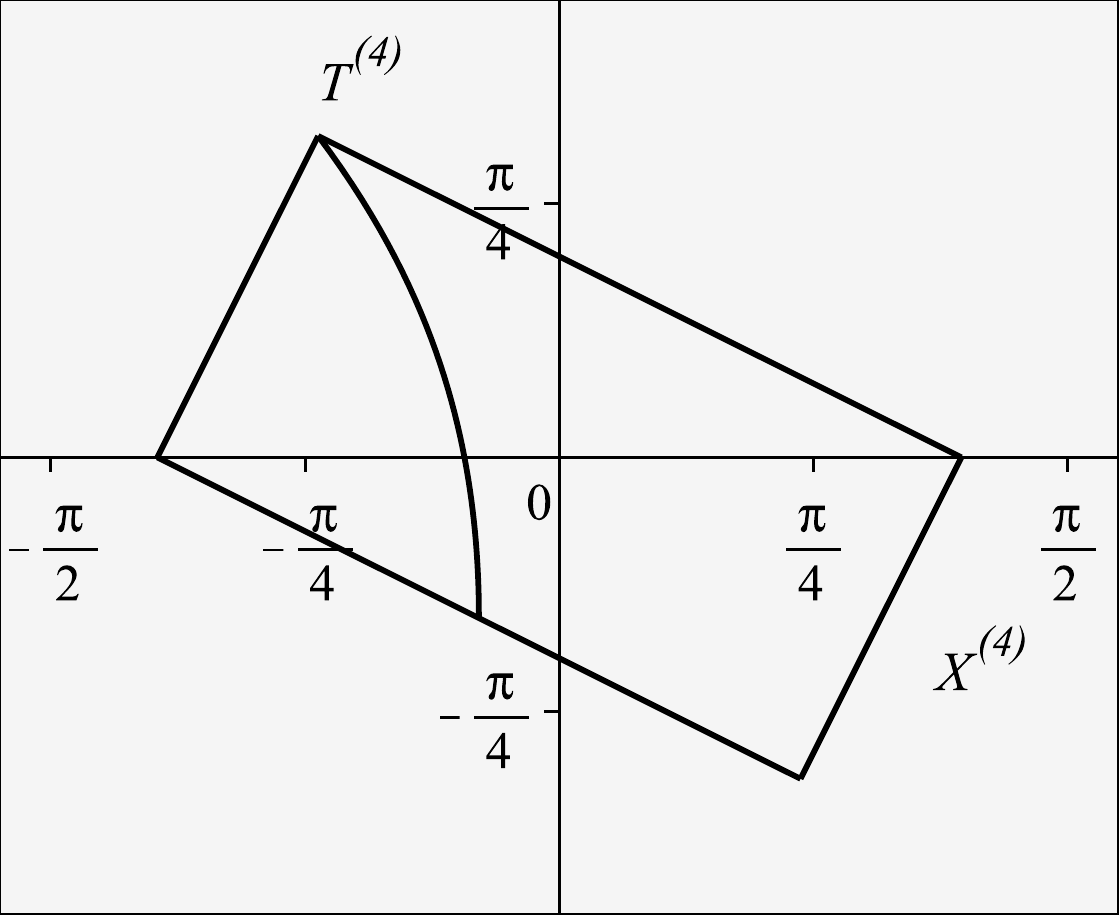}}%
\hspace{12pt}%
\subfigure[][]{%
\label{Trafo5}%
\includegraphics[width=0.3\columnwidth]{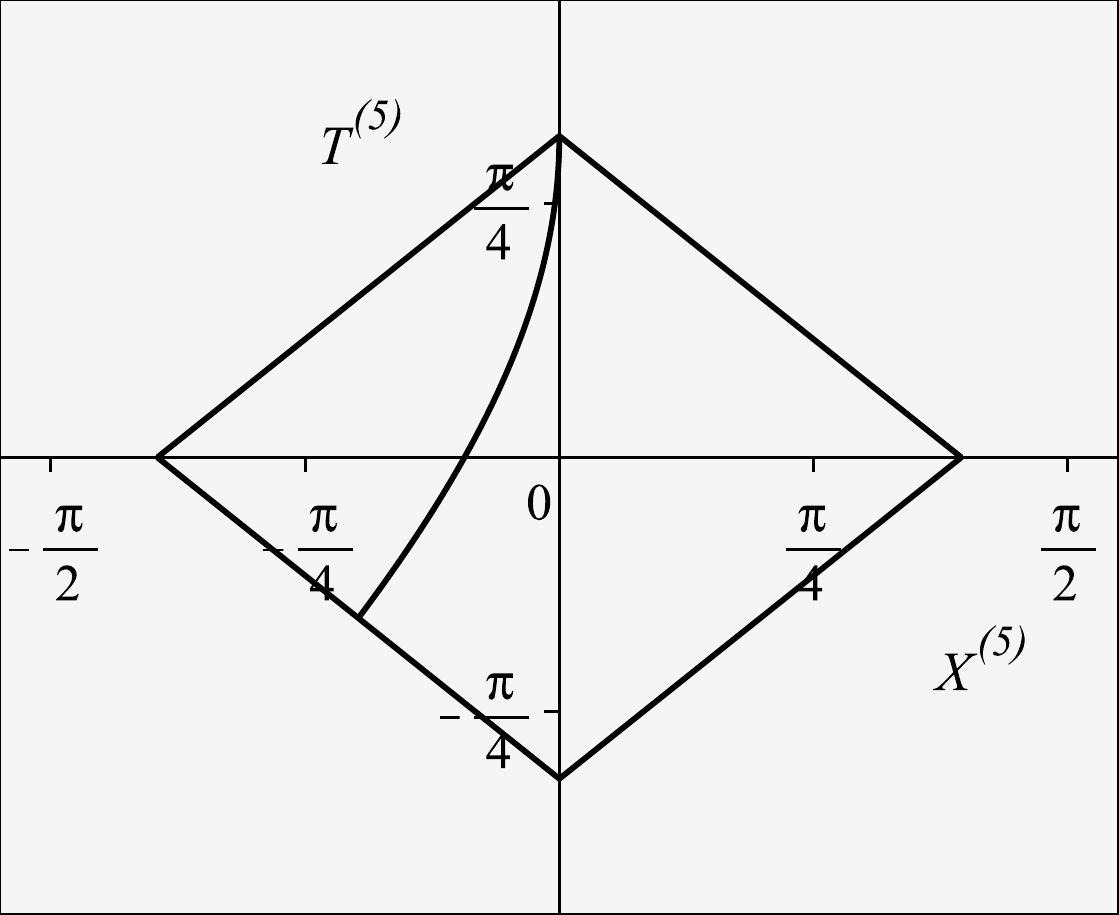}}%
\caption[...]
{Geometric representations of the transformations (\ref{T1})--(\ref{T5}).}%
\label{Trafo}%
\end{figure}

\noindent Subsequently, we translate the rectangle by the distance $- \pi/(4 \sqrt{2} \hspace{0.03cm} )$ along the ordinate [FIG.\ \ref{Trafo2} $\rightarrow$ FIG.\ \ref{Trafo3}] and rotate it clockwise about an angle of $\arctan{(1/2)} \,\, \textnormal{rad}$ [FIG.\ \ref{Trafo3} $\rightarrow$ FIG.\ \ref{Trafo4}] using the mappings
\begin{equation} \label{T3}
\mathfrak{T}^{(3)} \colon
\begin{cases}
\, \displaystyle \biggl(0, \frac{\pi}{2 \sqrt{2}}\biggr) \times \biggl(- \frac{\pi}{2 \sqrt{2}}, \frac{\pi}{2 \sqrt{2}}\biggr) \rightarrow \biggl(- \frac{\pi}{4 \sqrt{2}}, \frac{\pi}{4 \sqrt{2}}\biggr) \times \biggl(- \frac{\pi}{2 \sqrt{2}}, \frac{\pi}{2 \sqrt{2}}\biggr) \vspace{0.3cm} \\
\hspace{2.70cm} (T^{(2)}, X^{(2)}) \mapsto (T^{(3)}, X^{(3)})  
\end{cases} 
\end{equation}
with 
\begin{equation*}
T^{(3)} = T^{(2)} - \frac{\pi}{4 \sqrt{2}} \quad \textnormal{and} \quad X^{(3)} = X^{(2)}
\end{equation*}
and
\begin{equation} \label{T4}
\mathfrak{T}^{(4)} \colon
\begin{cases}
\, \displaystyle \biggl(- \frac{\pi}{4 \sqrt{2}}, \frac{\pi}{4 \sqrt{2}}\biggr) \times \biggl(- \frac{\pi}{2 \sqrt{2}}, \frac{\pi}{2 \sqrt{2}}\biggr) \rightarrow \biggl(- \frac{\pi}{\sqrt{10}}, \frac{\pi}{\sqrt{10}}\biggr) \times \biggl(- \sqrt{\frac{5}{2}} \, \frac{\pi}{4}, \sqrt{\frac{5}{2}} \, \frac{\pi}{4}\biggr) \vspace{0.3cm} \\
\hspace{3.534cm} (T^{(3)}, X^{(3)}) \mapsto (T^{(4)}, X^{(4)})  
\end{cases} 
\end{equation}
with 
\begin{equation*}
T^{(4)} = \frac{2 T^{(3)} - X^{(3)}}{\sqrt{5}} \quad \textnormal{and} \quad X^{(4)} = \frac{T^{(3)} + 2 X^{(3)}}{\sqrt{5}} \, ,
\end{equation*}
respectively, where 
\begin{equation*}
\begin{split}
T^{(4)} & < \Theta\biggl(X^{(4)} + \sqrt{\frac{5}{2}} \, \frac{\pi}{4}\biggr) \, \Theta\biggl(- \frac{3 \pi}{4 \sqrt{10}} - X^{(4)}\biggr) \biggl[2 X^{(4)} + \sqrt{\frac{5}{2}} \, \frac{\pi}{2}\biggr] \\ \\
& \hspace{0.4cm} + \Theta\biggl(X^{(4)} + \frac{3 \pi}{4 \sqrt{10}}\biggr) \, \Theta\biggl(\sqrt{\frac{5}{2}} \, \frac{\pi}{4} - X^{(4)}\biggr) \frac{1}{2} \biggl[- X^{(4)} + \sqrt{\frac{5}{2}} \, \frac{\pi}{4}\biggr] 
\end{split}
\end{equation*}
and 
\begin{equation*}
\begin{split}
& - \Theta\biggl(X^{(4)} + \sqrt{\frac{5}{2}} \, \frac{\pi}{4}\biggr) \, \Theta\biggl(\frac{3 \pi}{4 \sqrt{10}} - X^{(4)}\biggr) \frac{1}{2} \biggl[X^{(4)} + \sqrt{\frac{5}{2}} \, \frac{\pi}{4}\biggr] \\ \\
& + \Theta\biggl(X^{(4)} - \frac{3 \pi}{4 \sqrt{10}}\biggr) \, \Theta\biggl(\sqrt{\frac{5}{2}} \, \frac{\pi}{4} - X^{(4)}\biggr) \biggl[2 X^{(4)} - \sqrt{\frac{5}{2}} \, \frac{\pi}{2}\biggr] < T^{(4)} \, .
\end{split}
\end{equation*}
Here, $\Theta(\, . \,) := [1 + \textnormal{sgn}(\, . \,)]/2$ is the Heaviside step function. Lastly, we employ the shear transformation
\begin{equation} \label{T5}
\mathfrak{T}^{(5)} \colon
\begin{cases}
\, \displaystyle \biggl(- \frac{\pi}{\sqrt{10}}, \frac{\pi}{\sqrt{10}}\biggr) \times \biggl(- \sqrt{\frac{5}{2}} \, \frac{\pi}{4}, \sqrt{\frac{5}{2}} \, \frac{\pi}{4}\biggr) \rightarrow \biggl(- \frac{\pi}{\sqrt{10}}, \frac{\pi}{\sqrt{10}}\biggr) \times \biggl(- \sqrt{\frac{5}{2}} \, \frac{\pi}{4}, \sqrt{\frac{5}{2}} \, \frac{\pi}{4}\biggr) \vspace{0.3cm} \\
\hspace{4.01cm} (T^{(4)}, X^{(4)}) \mapsto (T^{(5)}, X^{(5)})  
\end{cases} 
\end{equation}
with 
\begin{equation*}
V = T^{(5)} = T^{(4)} \quad \textnormal{and} \quad W = X^{(5)} = \frac{3 T^{(4)}}{4} + X^{(4)} \, ,
\end{equation*}
where $4 \hspace{0.02cm} |X^{(5)}|/5 - \pi/\sqrt{10} < T^{(5)} < - 4 \hspace{0.02cm} |X^{(5)}|/5 + \pi/\sqrt{10}$, in order to obtain the centrally symmetric diamond [FIG.\ \ref{Trafo4} $\rightarrow$ FIG.\ \ref{Trafo5}]. The composition of the transformations (\ref{T1})--(\ref{T5}) yields the relations 

\begin{equation} \label{VWTX}
\begin{split}
V & = \frac{2}{\sqrt{10} \, [\pi - X - T]} \, \biggl(- (X + T)^2 + \pi \, \biggl[X + 2 T - \frac{\pi}{4}\biggr]\biggr) \\ \\
W & = \frac{5}{2 \sqrt{10} \, [\pi - X - T]} \, \biggl(- (X + T)^2 + \frac{\pi}{2} \, \biggl[3 X + T - \frac{\pi}{2}\biggr]\biggr) \, . 
\end{split}
\end{equation}

Next, we formulate conditions for the determination of specific indexed families of smooth functions $\bigl(V_{\lambda}(W) \, \big| \, \lambda \in \mathbb{R}\bigr)$ that foliate the diamond: 
\vspace{0.1cm}
\begin{itemize} 
\item[$(\mathcal{C}1)$] Limit conditions: $\displaystyle V_{\pm \infty}(W) = \pm \frac{4}{5} \bigl[- |W| + \mu\bigr]$ \vspace{0.18cm}
\item[$(\mathcal{C}2)$] Boundary conditions: 
$V_{\lambda}(\pm \mu) = 0 \quad \forall \, \lambda \in \mathbb{R}$ \vspace{0.25cm}
\item[$(\mathcal{C}3)$] Smoothness condition: $V_{|\lambda| < \infty}(W) \in C^{\infty}\bigl((- \mu, \mu), \mathbb{R}\bigr)$ 
\item[$(\mathcal{C}4)$] Causality conditions: $\displaystyle - \frac{4}{5} < \partial_W V_{\lambda} < \frac{8}{5} \, \frac{4 W - \pi \sqrt{10}}{10 V_{\lambda} - \pi \sqrt{10}} \quad \textnormal{and} \quad 0 < \partial_{\lambda} V_{\lambda} \quad \forall \, W \in (- \mu, \mu)$ \vspace{0.08cm}
\item[$(\mathcal{C}5)$] Symmetry condition: $\lambda \mapsto - \lambda \,\, \Leftrightarrow \,\, (V_{\lambda}, W) \mapsto (- V_{\lambda}, W)$ ,
\end{itemize} \vspace{0.1cm}
where $\mu := \sqrt{5/2} \, \pi/4$. We note that the limit conditions in $(\mathcal{C}1)$ define the geometrical shape of the diamond, while the boundary conditions in $(\mathcal{C}2)$ specify the starting point and the endpoint of the functions $V_{\lambda}$. Besides, the first boundary condition $(V_{\lambda}, W) = (0, + \mu)$ gives rise to asymptotic flatness at spacelike infinity, whereas the second boundary condition $(V_{\lambda}, W) = (0, - \mu)$ ensures that the functions hit the curvature singularity only asymptotically. The meaning of the smoothness condition in $(\mathcal{C}3)$ is obvious. Moreover, the causality conditions in $(\mathcal{C}4)$ constrain the functions to be spacelike on the one hand, and nonintersecting on the other. Direct computations show that these conditions imply that the gradient $\boldsymbol{\nabla} \lambda$ on $\textnormal{B}_{\textnormal{I}} \cup \textnormal{B}_{\textnormal{II}}$ is future-directed and timelike, and hence that $\lambda$ is a temporal function. The reflection symmetry provided by the symmetry condition in $(\mathcal{C}5)$, however, is only incorporated for convenience. Therefore, it is not strictly required and may be dropped if desired. 

Finally, we regard the indices $\lambda$ of these families as time variables of global coordinate systems on $\textnormal{B}_{\textnormal{I}} \cup \textnormal{B}_{\textnormal{II}}$ determined by the general transformation 
\begin{equation*} 
\mathfrak{T}^{\textnormal{C}} \colon
\begin{cases}
\, \displaystyle \biggl(- \frac{\pi}{4}, \frac{\pi}{4}\biggr) \times \biggl(- \frac{\pi}{4}, \frac{\pi}{2}\biggr) \times (0, \pi) \times [0, 2 \pi) \rightarrow \mathbb{R} \times \biggl(- \frac{\pi}{4}, \frac{\pi}{2}\biggr) \times (0, \pi) \times [0, 2 \pi) \vspace{0.3cm} \\
\hspace{4.51cm} (T, X, \theta, \varphi) \mapsto (\lambda, X', \theta', \varphi')
\end{cases} 
\end{equation*}
with 
\begin{equation*}
\lambda = \lambda(T, X) \, , \quad X' = X \, , \quad \theta' = \theta \, , \quad \textnormal{and} \quad \varphi' = \varphi \, .
\end{equation*}	
We now prove that the level sets of the time variables $\lambda$, which are by the above construction smooth, spacelike, nonintersecting, asymptotically flat, and foliate the entire Schwarzschild black hole geometry, constitute Cauchy hypersurfaces.
\begin{Prp} 
We let $\mathfrak{S} \equiv \mathfrak{S}_{\lambda_0}$ be homeomorphic to the subset 
\begin{equation*}
\{\lambda_0\} \times \biggl(- \frac{\pi}{4}, \frac{\pi}{2}\biggr) \times S^2 \subset \textnormal{B}_{\textnormal{I}} \cup \textnormal{B}_{\textnormal{II}}
\end{equation*}
of the joined exterior and interior regions of the Schwarzschild black hole geometry, where this subset is a level set of the time coordinates $\lambda$ at $\lambda_0 = \textnormal{const.}$ Then, $\mathfrak{S}$ is a Cauchy hypersurface.
\end{Prp}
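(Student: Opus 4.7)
The plan is to exploit the product structure of the Schwarzschild metric in (\ref{prodstr}) to reduce the question to a two-dimensional statement on the diamond obtained from the transformations (\ref{T1})-(\ref{T5}), and then verify the three defining properties of a Cauchy hypersurface---closedness, achronality, and intersection with every inextensible timelike curve exactly once---directly from the conditions $(\mathcal{C}1)$-$(\mathcal{C}5)$ imposed on the family $V_\lambda$.

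First I would verify closedness. By construction, $\mathfrak{S}$ is homeomorphic to the graph of $V_{\lambda_0}$ over $W \in (-\mu, \mu)$ taken together with the $S^2$-factor. The graph is smooth on $(-\mu,\mu)$ by $(\mathcal{C}3)$, and the boundary condition $(\mathcal{C}2)$ forces it to terminate precisely at $(V,W) = (0, \pm \mu)$, which correspond to spacelike infinity $i^0$ and to the curvature singularity---both excluded from $\textnormal{B}_\textnormal{I} \cup \textnormal{B}_\textnormal{II}$. Hence $\mathfrak{S}$ is closed in the ambient manifold. Achronality then follows from the temporal-function property of $\lambda$: as noted directly after $(\mathcal{C}5)$, the causality conditions $(\mathcal{C}4)$ imply $\boldsymbol{g}(\boldsymbol{\nabla}\lambda, \boldsymbol{\nabla}\lambda) > 0$ and $\boldsymbol{g}(\partial_\lambda, \boldsymbol{\nabla}\lambda) > 0$ with $\partial_\lambda$ timelike, so $\lambda$ is strictly increasing along every future-directed causal curve in $\textnormal{B}_\textnormal{I} \cup \textnormal{B}_\textnormal{II}$. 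In particular, no two points of $\mathfrak{S} = \lambda^{-1}(\lambda_0)$ can be chronologically related.

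For the Cauchy property, let $\gamma \subset \textnormal{B}_\textnormal{I} \cup \textnormal{B}_\textnormal{II}$ be an inextensible timelike curve. Strict monotonicity of $\lambda$ along $\gamma$ gives uniqueness of intersection with $\mathfrak{S}$, so only existence remains. Using the product decomposition, the projection $\gamma_\textnormal{L}$ of $\gamma$ onto the Lorentzian factor $\mathfrak{M}^{(2)}_\textnormal{L}$ (equivalently, onto the diamond in $(V,W)$ coordinates) is a causal curve with respect to $\boldsymbol{g}^{(2)}_\textnormal{L}$, by the causal-dominance property cited in the excerpt. The monotonicity clause $\textnormal{d}V_\lambda/\textnormal{d}\lambda > 0$ in $(\mathcal{C}4)$ and the limit conditions $(\mathcal{C}1)$ together show that the indexed family $\{V_\lambda\}_{\lambda \in \mathbb{R}}$ gives a pairwise-disjoint foliation of the open diamond, with the null boundary segments $V = \pm \tfrac{4}{5}[-|W|+\mu]$ appearing as the $\lambda \to \pm \infty$ limits. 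Since $\gamma$ is inextensible and the only possible endpoints in the conformal completion lie on these null boundary segments or at the corner points (all of which are excluded from the spacetime), the continuous function $\lambda \circ \gamma$ must range over all of $\mathbb{R}$, and in particular attain the value $\lambda_0$ by the intermediate value theorem. Hence $\gamma$ meets $\mathfrak{S}$, which shows $D(\mathfrak{S}) = \textnormal{B}_\textnormal{I} \cup \textnormal{B}_\textnormal{II}$.

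The main obstacle is this last existence argument: one has to rule out that an inextensible timelike curve accumulates at a lateral corner of the diamond (for instance at $i^0$ or at the asymptotic approach to $r=0$) without ever crossing the chosen leaf. I expect this to be handled by combining (i) the endpoint structure of $V_{\pm \infty}$ from $(\mathcal{C}1)$, which identifies the asymptotic leaves with the null boundary components of the diamond, and (ii) the fact that the excluded strata $i^0$ and $\{r=0\}$ cannot be reached by any inextensible timelike curve of finite parameter range, so that the monotone function $\lambda \circ \gamma$ is forced to sweep out all of $\mathbb{R}$ rather than bounded subintervals. With this settled, $\mathfrak{S}$ is closed, achronal, and has full domain of dependence, and therefore is a Cauchy hypersurface.
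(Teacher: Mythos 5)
Your closedness and achronality steps are fine and essentially coincide with the paper's (for closedness the paper is more economical: $\mathfrak{S}$ is a level set of the continuous function $\lambda$, so its complement is open; no appeal to the graph structure or to $(\mathcal{C}2)$ is needed). For the step $D(\mathfrak{S}) = \textnormal{B}_{\textnormal{I}} \cup \textnormal{B}_{\textnormal{II}}$ you take a genuinely different route from the paper, and that is where your argument has a real gap. The paper argues softly, with no asymptotic analysis of the leaves at all: supposing $p \in H^+(\mathfrak{S})$, achronality and edgelessness of $\mathfrak{S}$ give a past-inextensible null geodesic $\gamma \subset H^+(\mathfrak{S})$ with future endpoint $p$, so $\gamma \subset J^+(\mathfrak{S}) \cap J^-(p)$, which is contained in a compact set because $\textnormal{B}_{\textnormal{I}} \cup \textnormal{B}_{\textnormal{II}}$ is already known to be globally hyperbolic; since a causal curve cannot be imprisoned in a compact set of a stably causal spacetime, $H^+(\mathfrak{S}) = \emptyset$, and time duality yields $H(\mathfrak{S}) = \emptyset$. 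That proof consumes global hyperbolicity as an input and uses only the temporal-function property of $\lambda$; your direct intermediate-value strategy would, if completed, be more informative (it would re-derive global hyperbolicity of the region rather than assume it), but it is harder, precisely at the point you yourself flag.

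The gap, concretely: the inference ``the endpoints lie on the null boundary segments, which are the $\lambda \to \pm\infty$ limits, hence $\lambda \circ \gamma$ sweeps all of $\mathbb{R}$'' is not valid as stated, because by $(\mathcal{C}2)$ \emph{every} leaf terminates at the two lateral corners $(V, W) = (0, \pm \mu)$; in any neighborhood of those corners all values of $\lambda$ occur, so a timelike curve whose limit set meets a lateral corner could a priori have bounded $\lambda$. Your proposed mechanism (ii) does not repair this and is partly wrong: inextensible timelike curves \emph{do} accumulate on the stratum $r = 0$ (radial infall reaches the singular edge in finite proper time; inextensibility has nothing to do with ``finite parameter range''). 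That particular accumulation is harmless, since $V_{+\infty}$ contains the singular edge, but it shows the exclusion argument you invoke is not the operative one. What must actually be proved is: (i) no inextensible timelike curve accumulates only at a lateral corner --- this can be done by elementary monotonicity in the conformally flat $(T, X)$ plane, where $|\textnormal{d}T| > |\textnormal{d}X|$ makes both $T + X$ and $T - X$ strictly increasing along future-directed timelike curves, while $T - X$ tends to its infimum $- \pi/2$ at $i^0$ and $T + X$ tends to its infimum $0$ at the corner $(T, X) = (\pi/4, - \pi/4)$, a contradiction (the time-reversed cases are excluded analogously); note that one should work in $(T, X)$ rather than $(V, W)$, since $\mathfrak{T}^{(2)}$ degenerates at the corners and the $(V, W)$ light cones are distorted there; and (ii) near every attainable future (resp.\ past) limit point, each fixed leaf $V_{\lambda_0}$ is strictly separated from the upper (resp.\ lower) boundary of the diamond on compact subsets avoiding $W = \pm \mu$, which forces $\lambda \to + \infty$ (resp.\ $- \infty$) along the curve. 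With (i) and (ii) supplied, your argument closes and gives a self-contained proof; without them, the key existence step remains a conjecture.
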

\begin{proof}	
We begin by noting that $\mathfrak{S}$ is closed in $\textnormal{B}_{\textnormal{I}} \cup \textnormal{B}_{\textnormal{II}}$, which is an immediate consequence of the fact that its complement 
\begin{equation*}
\displaystyle \mathfrak{S}^{\textnormal{c}} \cong \mathbb{R} \backslash \{\lambda_0\} \times \biggl(- \frac{\pi}{4}, \frac{\pi}{2}\biggr) \times S^2
\end{equation*}
is open. Moreover, as the time coordinates $\lambda$ are temporal functions on $\textnormal{B}_{\textnormal{I}} \cup \textnormal{B}_{\textnormal{II}}$, that is, $\textnormal{B}_{\textnormal{I}} \cup \textnormal{B}_{\textnormal{II}}$ is stably causal \cite{MingSan}, any connected causal curve through this region can intersect $\mathfrak{S}$ at most once. Thus, $\mathfrak{S}$ is achronal. It remains to be shown that the domain of dependence $D(\mathfrak{S}) = \textnormal{B}_{\textnormal{I}} \cup \textnormal{B}_{\textnormal{II}}$. To this end, it suffices to demonstrate that the total Cauchy horizon $H(\mathfrak{S})$ of $\mathfrak{S}$ is empty using a proof by contradiction. Hence, we suppose that there exists a point $p$ in the future Cauchy horizon $H^+(\mathfrak{S})$. Since $\mathfrak{S}$ is achronal and edgeless, $p$ is the future endpoint of a null geodesic $\gamma \subset H^+(\mathfrak{S})$, which is past inextensible in $\textnormal{B}_{\textnormal{I}} \cup \textnormal{B}_{\textnormal{II}}$ \cite{Wald}. From this, it follows that $\gamma \subset J^+(\mathfrak{S}) \cap J^-(p)$. Furthermore, as $\textnormal{B}_{\textnormal{I}} \cup \textnormal{B}_{\textnormal{II}}$ is globally hyperbolic, $J^+(\mathfrak{S}) \cap J^-(p)$ is contained in a compact set. And given that $\gamma$ cannot be imprisoned in a compact set that is stably causal \cite{Min}, we are led to a contradiction. Accordingly, $H^+(\mathfrak{S}) = \emptyset$. Due to time duality, we can argue that the same holds true for $H^-(\mathfrak{S})$, and therefore $H(\mathfrak{S}) = \emptyset$. 

\end{proof}

\section{Application to an Integrated Algebraic Sigmoid Function} \label{IV}

\begin{figure}[t]%
\centering
\subfigure[][]{%
\label{SFCurves}%
\includegraphics[width=0.45\columnwidth]{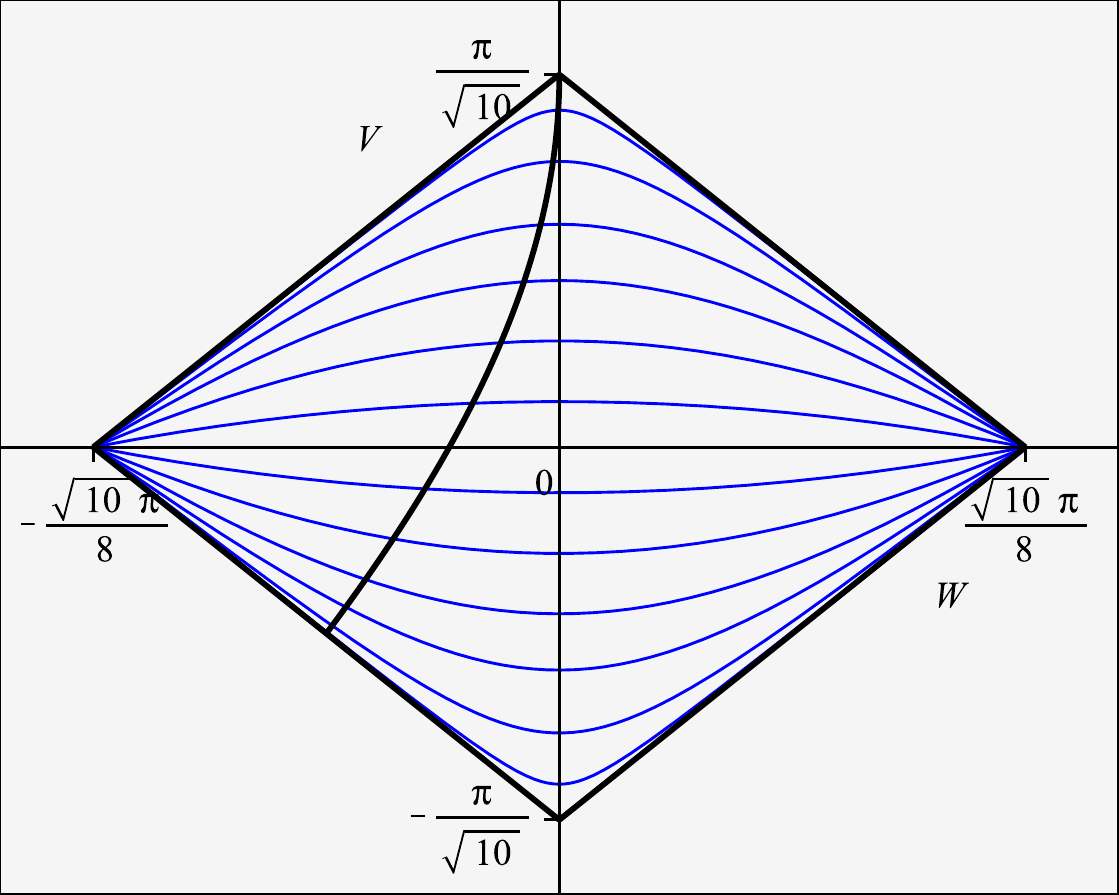}}
\hspace{15pt}%
\subfigure[][]{%
\label{CPDSchwarzschildF}%
\includegraphics[width=0.45\columnwidth]{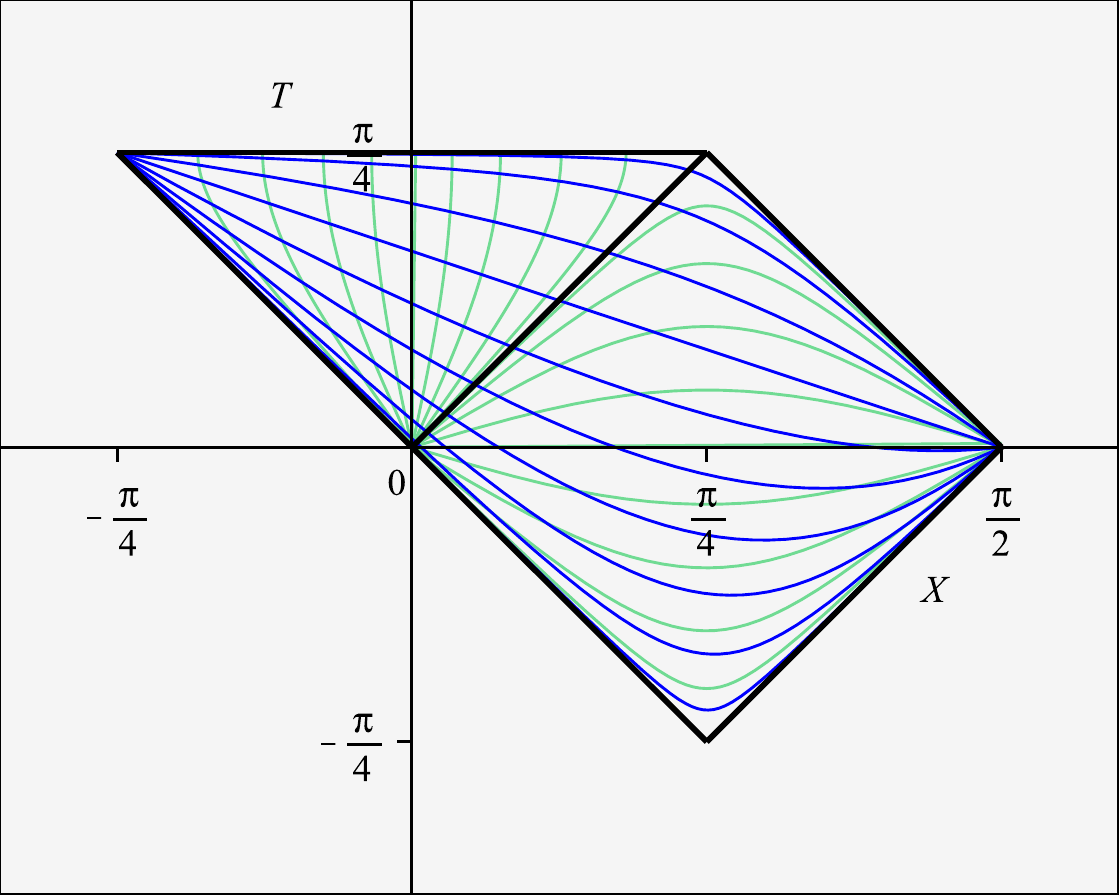}}%
\caption[...]
{Diamond representation of the Schwarzschild black hole geometry with smooth functions $V_{\lambda}(W)$ defined in Equation (\ref{fssfc}) for index values $\lambda \in \pm \{0, 0.2, 0.5, 0.9, 1.5, 3, 8\}$ \subref{SFCurves} and Penrose diagram of the Schwarzschild black hole geometry with level sets of the Cauchy temporal function $\lambda$ specified in Equation (\ref{CauchyTempFunc}) for values in $\{- 10, - 3.2, - 1.6, - 0.9, - 0.45, - 0.1, 0.28, 0.8, 2.1, 6.5\}$ (blue curves) and with level sets of the normalized Schwarzschild time coordinate $t/M \in \pm \{0, 1.24, 2.77, 4.75, 7.78\}$ for $\textnormal{B}_{\textnormal{I}}$ and $t/M \in \pm \{0, 0.86, 1.96, 3.58, 6.09\}$ for $\textnormal{B}_{\textnormal{II}}$ (aquamarine curves) for comparison \subref{CPDSchwarzschildF}.}%
\label{SchwarzschildFull}%
\end{figure}

\noindent In this section, we study an example of the families $\bigl(V_{\lambda}(W) \, \big| \, \lambda \in \mathbb{R}\bigr)$ that is based on an integrated algebraic sigmoid function. To be more precise, since our $2$-dimensional diagrammatic representation of the Schwarzschild black hole geometry $\textnormal{B}_{\textnormal{I}} \cup \textnormal{B}_{\textnormal{II}}$ is in the form of a centrally symmetric diamond, we are essentially interested in a smooth approximation of the absolute value function $|W|$ [see the limit conditions in $(\mathcal{C}1)$]. By considering the derivative of the absolute value function, namely the signum function $\textnormal{sgn}(W)$, we may easily find such a smooth approximation in terms of the integral of a hyperbolic tangent, an arctangent function, or an algebraic function. In the following, we work out the horizon-penetrating Cauchy coordinate system and the corresponding metric representation associated with the algebraic sigmoid function approximation
\begin{equation} \label{sgn}
\textnormal{sgn}(W) \approx \frac{|\lambda| W}{\sqrt{1 + \lambda^2 W^2}} \, , 
\end{equation}
where $\lambda$ serves as approximation parameter, because this simple example can be treated completely analytically. Thus, integrating Equation (\ref{sgn}) and imposing Conditions $(\mathcal{C}1)$--$(\mathcal{C}5)$, we obtain 
\begin{equation} \label{fssfc}
V_{\lambda}(W) = \frac{4}{5 \lambda} \Bigl[\sqrt{1 + \lambda^2 \mu^2} - \sqrt{1 + \lambda^2 W^2} \, \Bigr]  
\end{equation}
[for an illustration, see FIG.\ \ref{SFCurves}]. Inverting this expression with respect to $\lambda$ and substituting the relations specified in Equation (\ref{VWTX}) gives rise to the transformation from compactified Kruskal--Szekeres spacetime coordinates into horizon-penetrating Cauchy coordinates 
\begin{equation*} 
\mathfrak{T}^{\textnormal{C}} \colon
\begin{cases}
\, \displaystyle \biggl(- \frac{\pi}{4}, \frac{\pi}{4}\biggr) \times \biggl(- \frac{\pi}{4}, \frac{\pi}{2}\biggr) \times (0, \pi) \times [0, 2 \pi) \rightarrow \mathbb{R} \times \biggl(- \frac{\pi}{4}, \frac{\pi}{2}\biggr) \times (0, \pi) \times [0, 2 \pi) \vspace{0.3cm} \\
\hspace{4.51cm} (T, X, \theta, \varphi) \mapsto (\lambda, X', \theta', \varphi')
\end{cases} 
\end{equation*}
with 
\begin{equation} \label{CauchyTempFunc}
\lambda = \frac{5 V(T, X)}{\displaystyle 2 \sqrt{\biggl[W(T, X)^2 - \frac{25 V(T, X)^2}{16} + \frac{5 \pi^2}{32}\biggr]^2 - \frac{5 \pi^2 W(T, X)^2}{8}}} \, , \quad X' = X \, ,\quad \theta' = \theta \, , \quad \textnormal{and} \quad \varphi' = \varphi \, .
\end{equation}
The Schwarzschild metric formulated in these coordinates reads
\begin{equation*}
\begin{split}
\boldsymbol{g} & = \frac{32 M^3 \, e^{- r/(2 M)}}{\bigl[15 \pi \lambda + \sqrt{160 + 25 \pi^2 \lambda^2} \, \bigr]^4 \, \bigl[\cos^2(T) - \sin^2(X)\bigr]^2 \, r} \, \Bigl[400 \pi^2 \, \mathscr{C}_{\lambda}^2 \, \textnormal{d}\lambda \otimes \textnormal{d}\lambda - 40 \pi \, \mathscr{C}_{\lambda} \mathscr{C}_{X'} \\ \\
& \hspace{0.5cm} \times (\textnormal{d}\lambda \otimes \textnormal{d}X' + \textnormal{d}X' \otimes \textnormal{d}\lambda) + \bigl(4 \, \mathscr{C}_{X'}^2 - \bigl[15 \pi \lambda + \sqrt{160 + 25 \pi^2 \lambda^2} \, \bigr]^4 \hspace{0.04cm} \bigr) \, \textnormal{d}X' \otimes \textnormal{d}X' \, \Bigr] - r^2 \, \boldsymbol{g}_{S^2} \, ,
\end{split}
\end{equation*}
where
\begin{equation*}
\begin{split}
& \mathscr{C}_{\lambda} := \frac{20 \, [4 X' -5 \pi]}{\sqrt{160 + 25 \pi^2 \lambda^2}} - \frac{\sqrt{5 \pi}}{\sqrt{5 \pi \lambda^2 \, [X' - \pi/4]^2 - 2 \, [4 X' - 3 \pi]}} \biggl[\frac{10 \lambda \, (8 X'^2 - 2 \pi X' - \pi^2)}{\sqrt{160 + 25 \pi^2 \lambda^2}} + 3 \, (4 X' - 3 \pi)\biggr] \\ \\
& \mathscr{C}_{X'} := 10 \, \bigl[5 \pi^2 \lambda^2 + \pi \lambda \sqrt{160 + 25 \pi^2 \lambda^2} + 8\bigr] + \frac{\sqrt{5 \pi} \, \bigl(15 \pi \lambda + \sqrt{160 + 25 \pi^2 \lambda^2} \, \bigr) \bigl(5 \pi \lambda^2 \, [X' - \pi/4] - 4\bigr)}{\sqrt{5 \pi \lambda^2 \, [X' - \pi/4]^2 - 2 \, [4 X' - 3 \pi]}} \, .
\end{split}
\end{equation*}
We depict the foliation of the Schwarzschild black hole geometry by the level sets of $\lambda$ in the Penrose diagram in FIG.\ \ref{CPDSchwarzschildF}.

\section{Generalization to the Reissner--Nordstr\"om Black Hole Geometry} \label{V}

\noindent We generalize our results to the Reissner--Nordstr\"om black hole geometry up to the Cauchy horizon. This spacetime is, like the Schwarzschild black hole geometry, a connected, smooth, globally hyperbolic and asymptotically flat Lorentzian $4$-manifold ($\mathfrak{M}, \boldsymbol{g}$) with $\mathfrak{M}$ being homeomorphic to $\mathbb{R}^2 \times S^2$. It is, however, based on the $2$-parameter family of exact, spherically symmetric solutions $\boldsymbol{g}$ of the more general Einstein--Maxwell equations, which can be used to account for the final equilibrium state in the time evolution of the gravitational field of an isolated, spherically symmetric, electrically charged black hole. We begin by performing the replacement
\begin{equation*}
1 - \frac{2 M}{r} \,\, \rightarrow \,\, 1 - \frac{2 M}{r} + \frac{Q^2}{r^2} =: \frac{\Delta(r)}{r^2}
\end{equation*}
in the $g_{t t}$ and $g_{r r}$ components of the Schwarzschild metric (\ref{BLmetric}), where the parameter $Q \in \mathbb{R}$ denotes the electrical charge of the black hole geometry satisfying the relation $0 < |Q| < M$, and the two real-valued roots $r_{\pm} := M \pm \sqrt{M^2 - Q^2}$ of the function $\Delta \hspace{-0.07cm}: \mathbb{R}_{> 0} \rightarrow [- M^2 + Q^2, \infty)$ define an outer and an inner event horizon, respectively. This replacement gives rise to the Schwarzschild-type representation of the nonextreme Reissner--Nordstr\"om metric \cite{Reiss, Nord}
\begin{equation} \label{RNM}
\boldsymbol{g} = \frac{\Delta}{r^2} \, \textnormal{d}t \otimes \textnormal{d}t - \frac{r^2}{\Delta} \, \textnormal{d}r \otimes \textnormal{d}r - r^2 \, \boldsymbol{g}_{S^2} \, .
\end{equation}
We point out that the canonical Reissner--Nordstr\"om black hole geometry comprises the three connected components $\textnormal{B}_{\textnormal{I}} := \mathbb{R} \times \mathbb{R}_{> r_+} \times S^2$, $\textnormal{B}_{\textnormal{II}} := \mathbb{R} \times (r_-, r_+) \times S^2$, and $\textnormal{B}_{\textnormal{III}} := \mathbb{R} \times (0, r_-) \times S^2$, which have a causal structure that is qualitatively different from the one of the Schwarzschild case as $\textnormal{B}_{\textnormal{III}}$ contains a curvature singularity at $r = 0$ with timelike character and, more importantly for the present purpose, the inner event horizon at $r = r_-$ is a Cauchy horizon. Consequently, since our geometric construction procedure requires the underlying Lorentzian $4$-manifold to be globally hyperbolic, we consider only the region $\textnormal{B}_{\textnormal{I}} \cup \textnormal{B}_{\textnormal{II}}$ of the Reissner--Nordstr\"om black hole geometry up to the Cauchy horizon. We then transform the Schwarzschild-type coordinates into compactified Kruskal--Szekeres-type spacetime coordinates
\begin{equation*} 
\mathfrak{T}^{\textnormal{KS}} \colon
\begin{cases}
\, \displaystyle \mathbb{R} \times \mathbb{R}_{> 0} \times (0, \pi) \times [0, 2 \pi) \rightarrow \biggl(- \frac{\pi}{4}, \frac{\pi}{2}\biggr) \times \biggl(- \frac{\pi}{4}, \frac{\pi}{2}\biggr) \times (0, \pi) \times [0, 2 \pi) \vspace{0.3cm} \\
\hspace{2.57cm} (t, r, \theta, \varphi) \mapsto (T, X, \theta', \varphi')
\end{cases} 
\end{equation*}
with 
\begin{equation*} 
\left\{\!\begin{aligned}
& \, T = \displaystyle \frac{1}{2} \arctan{\biggl(\frac{\sinh{(\alpha t)}}{\cosh{(\alpha r_{\star})}}\biggr)} & & \textnormal{and} \quad X = \displaystyle - \frac{1}{2} \arctan{\biggl(\frac{\cosh{(\alpha t)}}{\sinh{(\alpha r_{\star})}}\biggr)} + \frac{\pi \Theta(r_{\star})}{2} & & \, \textnormal{for} \,\,\, \textnormal{B}_{\textnormal{I}} \\[0.1cm] 
& \, T = \displaystyle - \frac{1}{2} \arctan{\biggl(\frac{\cosh{(\alpha t)}}{\sinh{(\alpha r_{\star})}}\biggr)} & & \textnormal{and} \quad X = \displaystyle - \frac{1}{2} \arctan{\biggl(\frac{\sinh{(\alpha t)}}{\cosh{(\alpha r_{\star})}}\biggr)} & & \, \textnormal{for} \,\,\, \textnormal{B}_{\textnormal{II}} 
\end{aligned}\right\} \, , 
\end{equation*}
$\theta' = \theta$, and $\varphi' = \varphi$, where $T \in \bigl(|X - \pi/4| - \pi/4, - |X - \pi/4| + \pi/4\bigr)$ and $X \in (0, \pi/2)$ for $\textnormal{B}_{\textnormal{I}}$ and $T \in \bigl(|X|, \pi/2 - |X|\bigr)$ and $X \in (- \pi/4, \pi/4)$ for $\textnormal{B}_{\textnormal{II}}$. Here, the Regge--Wheeler coordinate is defined as
\begin{equation*}
r_{\star} := r + \frac{r_+^2}{r_+ - r_-} \, \ln{\bigg|\frac{r}{r_+} - 1\bigg|} - \frac{r_-^2}{r_+ - r_-} \, \ln{\bigg(\frac{r}{r_-} - 1\bigg)} 
\end{equation*}
and $\alpha := (r_+ - r_-)/(2 r_+^2)$ is a positive constant. The Reissner--Nordstr\"om metric (\ref{RNM}) written in terms of these coordinates takes the form
\begin{equation*}
\boldsymbol{g} = \frac{r_+ \, r_- \, (r/r_- - 1)^{1 - r^2_-/r^2_+} \, e^{- 2 \alpha r}}{\alpha^2 \bigl[\cos^2(T) - \sin^2(X)\bigr]^2 \, r^2} \, (\textnormal{d}T \otimes \textnormal{d}T - \textnormal{d}X \otimes \textnormal{d}X) - r^2 \, \boldsymbol{g}_{S^2} \, .
\end{equation*}

\begin{figure}[t]%
\centering
\subfigure[][]{%
\label{CPDReissnerNordstroem}%
\includegraphics[width=0.45\columnwidth]{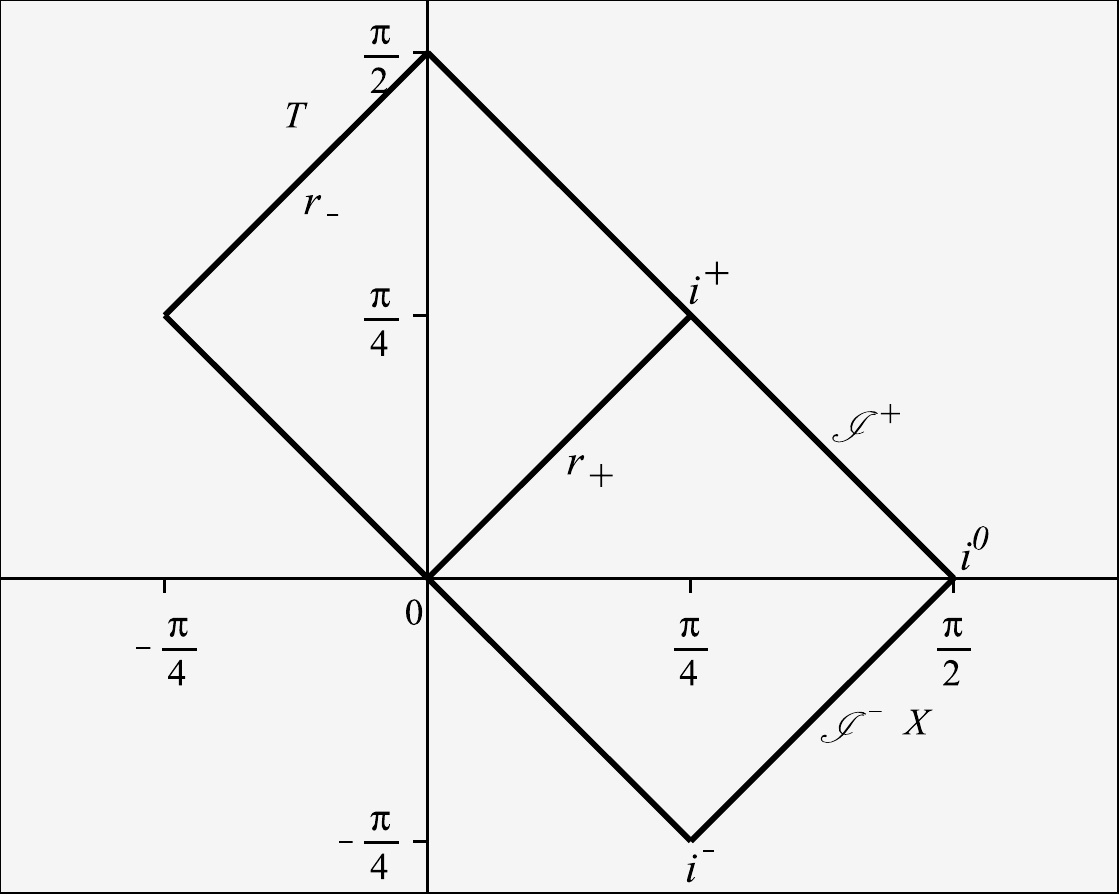}}
\hspace{15pt}%
\subfigure[][]{%
\label{CPDReissnerNordstroemF}%
\includegraphics[width=0.45\columnwidth]{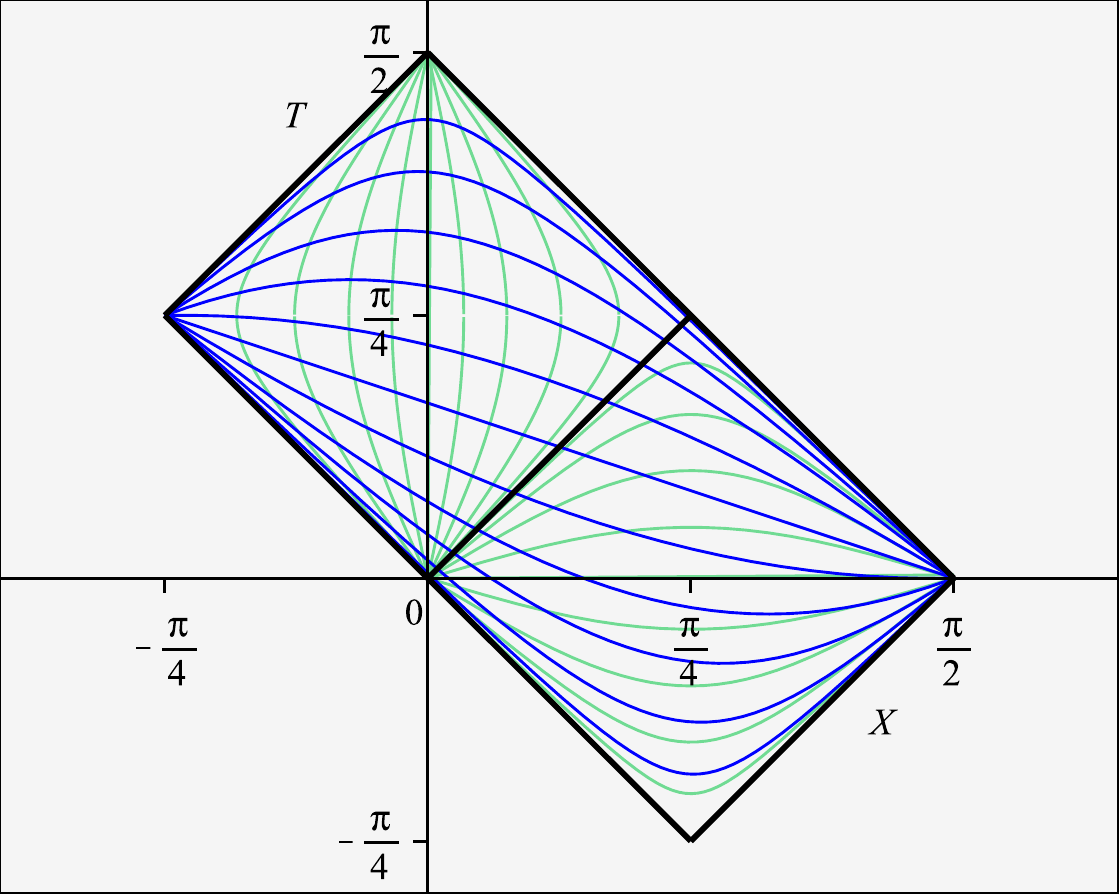}}%
\caption[...]
{Penrose diagram of the Reissner--Nordstr\"om black hole geometry up to the Cauchy horizon \subref{CPDReissnerNordstroem} and the same Penrose diagram with level sets of the Cauchy temporal function $\lambda$ defined in Equation (\ref{CauchyTempFunc2}) for values in $\pm \{0, 0.3, 0.65, 1.1, 2, 4\}$ (blue curves) and with level sets of the normalized Schwarzschild-type time coordinate $t/M \in \pm \{0, 1.24, 2.77, 4.75, 7.78\}$ for $\textnormal{B}_{\textnormal{I}}$ and $t/M \in \pm \{0, 0.86, 1.96, 3.58, 6.09\}$ for $\textnormal{B}_{\textnormal{II}}$ (aquamarine curves) for comparison \subref{CPDReissnerNordstroemF}.}%
\label{RNFull}%
\end{figure}

Next, we employ the method introduced in Section \ref{III} and work out the details of the analog of the specific integrated algebraic sigmoid function application (\ref{fssfc}) within the present framework. To this end, we have to perform the same steps as before, however, we may now omit transformation (\ref{T2}) because the Penrose diagram of the region $\textnormal{B}_{\textnormal{I}} \cup \textnormal{B}_{\textnormal{II}}$ of the Reissner--Nordstr\"om black hole geometry is already rectangularly shaped [cf.\ FIG.\ \ref{CPDReissnerNordstroem}]. This in turn leads to the first causality condition in $(\mathcal{C}4)$ assuming the form $\left|\partial_W V_{\lambda}\right| < 4/5$. Accordingly, we obtain the transformation from the above compactified Kruskal--Szekeres-type spacetime coordinates into the horizon-penetrating Cauchy coordinates 
\begin{equation*} 
\mathfrak{T}^{\textnormal{C}} \colon
\begin{cases}
\, \displaystyle \biggl(- \frac{\pi}{4}, \frac{\pi}{2}\biggr) \times \biggl(- \frac{\pi}{4}, \frac{\pi}{2}\biggr) \times (0, \pi) \times [0, 2 \pi) \rightarrow \mathbb{R} \times \biggl(- \frac{\pi}{4}, \frac{\pi}{2}\biggr) \times (0, \pi) \times [0, 2 \pi) \vspace{0.3cm} \\
\hspace{4.50cm} (T, X, \theta, \varphi) \mapsto (\lambda, X', \theta', \varphi')
\end{cases} 
\end{equation*}
with 
\begin{equation} \label{CauchyTempFunc2}
\lambda = \frac{6 T + 2 X - \pi}{\displaystyle \sqrt{10 \, \biggr[(T - X)^2 - \frac{\pi^2}{4}\biggr] \biggl[T + X - \frac{\pi}{2}\biggr] [T + X]}} \, , \quad X' = X \, , \quad \theta' = \theta \, , \quad \textnormal{and} \quad \varphi' = \varphi \, .
\end{equation}
Expressed via these coordinates, the Reissner--Nordstr\"om metric reads
\begin{equation*} 
\begin{split}
\boldsymbol{g} & = \frac{r_+ \, r_- \, (r/r_- - 1)^{1 - r^2_-/r^2_+} \, e^{- 2 \alpha r}}{\alpha^2 \bigl[\cos^2(T) - \sin^2(X)\bigr]^2 \, r^2} \, \biggl[\mathscr{G}^2 \textnormal{d}\lambda \otimes \textnormal{d}\lambda + \frac{\mathscr{E} \, \mathscr{G}}{\sqrt{1 + \mathscr{E}^2}} \, (\textnormal{d}\lambda \otimes \textnormal{d}X' + \textnormal{d}X' \otimes \textnormal{d}\lambda) - \frac{\textnormal{d}X' \otimes \textnormal{d}X'}{1 + \mathscr{E}^2}\biggr] \\ \\
& \hspace{0.4cm} - r^2 \, \boldsymbol{g}_{S^2} \, ,
\end{split}
\end{equation*}
where
\begin{equation*} 
\mathscr{E} := \frac{\sqrt{5}}{2} \, \Biggl[\lambda \biggl(X' - \frac{\pi}{8}\biggr) + \sqrt{\frac{1}{10} + \frac{\pi^2 \lambda^2}{64}} \, \Biggr] 
\end{equation*}
and
\begin{equation*} 
\mathscr{G} := \frac{1}{10 \lambda^2} \, \biggl[\biggl(\frac{1}{10} + \frac{\pi^2 \lambda^2}{64}\biggr) \bigl(1 + \mathscr{E}^2\bigr)\biggr]^{- 1/2} \, \Biggl[3 \sqrt{1 + \mathscr{E}^2} - \mathscr{E} - \sqrt{8 + \frac{5 \pi^2 \lambda^2}{4}} \, \Biggr] \, .
\end{equation*}
We emphasize that the metric coefficients $g_{\lambda \lambda}$ and $g_{\lambda X'}$ are, despite their appearance, also regular at $\lambda = 0$, which can be directly seen from the limits 
\begin{equation*} 
\lim_{|\lambda| \rightarrow 0} \mathscr{E} = \frac{1}{2 \sqrt{2}} \quad \textnormal{and} \quad \lim_{|\lambda| \rightarrow 0} \mathscr{G} = \frac{\sqrt{10}}{27} \, \biggl[4 X'^2 - \pi X' - \frac{\pi^2}{2}\biggr] \in \biggl(- \sqrt{\frac{5}{2}} \, \frac{\pi^2}{24}, 0\biggr) \, .
\end{equation*}
Therefore, this metric representation is nondegenerate everywhere on $\textnormal{B}_{\textnormal{I}} \cup \textnormal{B}_{\textnormal{II}}$. Moreover, direct computations show that the gradient of the time coordinate $\lambda$ defined in Equation (\ref{CauchyTempFunc2}) is future-directed and timelike. And by using a proof similar to the one of the Schwarzschild case (see the end of Section \ref{III}), one can demonstrate that the level sets of this time coordinate are Cauchy hypersurfaces. Thus, $\lambda$ is a Cauchy temporal function. The associated foliation of the region $\textnormal{B}_{\textnormal{I}} \cup \textnormal{B}_{\textnormal{II}}$ of the Reissner--Nordstr\"om black hole geometry is illustrated in the Penrose diagram in FIG.\ \ref{CPDReissnerNordstroemF}. We note in passing that in the Schwarzschild limit $|Q| \rightarrow 0$, some of the level sets of $\lambda$ lose their Cauchy property. This stems from the fact that all level sets located in the region above the line $3 T = - X + \pi/2$ intersect the curvature singularity of the Schwarzschild trapezoid. Hence, one obtains only a foliation of the limiting spacetime by spacelike hypersurfaces.

\section{Outlook} \label{VII}

\noindent As a future research project, we plan on generalizing our construction method of Cauchy coordinate systems to the axially symmetric Kerr black hole geometry up to the Cauchy horizon, which involves two major challenges. On the one hand, due to its nondiagonalizability, the nonextreme Kerr metric does not have the particular product structure (\ref{prodstr}), making it impossible to directly locally relate central aspects of the causal structure of the full Kerr black hole geometry to those represented in a Penrose diagram as in the present case of the spherically symmetric black hole geometries. To put it differently, since different $2$-dimensional restrictions of the Kerr black hole geometry, even when totally geodesic, lead in general to different Penrose diagrams, each depicting only slice-specific information on the global causal structure (see \cite{CrOlSz} for the examples of the axis of symmetry and the equatorial plane), there is usually no immediate connection between Cauchy surfaces in the Penrose diagram and Cauchy surfaces in the entire Kerr black hole geometry. Resolving this problem may necessitate devising a possibly higher-dimensional method specifically adapted to the metric structure of the Kerr black hole geometry. On the other hand, the time variable of the usual compactified Kruskal--Szekeres-type coordinate system for the nonextreme Kerr geometry \cite{ChandraBook, Roeken} is not a temporal function, which is in contrast to the compactified Kruskal--Szekeres time variables of the Schwarzschild and nonextreme Reissner--Nordstr\"om geometries. As this aspect is, however, paramount for the present method, we are required to first modify the construction of the Kruskal--Szekeres-type analytic extension of the nonextreme Kerr geometry accordingly. Otherwise, we could also work with an entirely different horizon-penetrating coordinate system already featuring a time coordinate that is a temporal function as basis for our geometric approach (for an example see the advanced Eddington--Finkelstein-type coordinate system used and analyzed in \cite{Roeken0}). While this may seem more suitable at first glance, the use of such a coordinate system could lead to yet unforeseen obstacles that would have to be resolved as well. In addition to this research project, we intend to apply our construction method of Cauchy coordinate systems to other spacetimes having the same metrical product structure as the Schwarzschild and Reissner--Nordstr\"om black hole geometries, thereby focusing, besides the determination of the associated Cauchy coordinate systems, on conceptual issues and the applicability of the method itself.

\section*{Acknowledgments}
\noindent The author is grateful to Miguel S\'anchez for useful discussions on the topic. This work was partially supported by the research project MTM2016-78807-C2-1-P funded by MINECO and ERDF.

\vspace{0.05cm}



\begin{thebibliography}{99}

\bibitem{BeigMurchadha} 
R.~Beig and N.~\'O~Murchadha, ``Late time behavior of the maximal slicing of the Schwarzschild black hole,'' \textit{Physical Review D} {\bf 57}, pp.\ 4728--4737 (1998).

\bibitem{BernalSanchez1} 
A.N.~Bernal and M.~S\'anchez, ``Smoothness of time functions and the metric splitting of globally hyperbolic spacetimes,'' \textit{Communications in Mathematical Physics} {\bf 257}, pp.\ 43--50 (2005).

\bibitem{BernalSanchez2} 
A.N.~Bernal and M.~S\'anchez, ``Further results on the smoothability of Cauchy hypersurfaces and Cauchy time functions,'' \textit{Letters in Mathematical Physics} {\bf 77}, pp.\ 183--197 (2006).

\bibitem{BrillCavalloIsenberg} 
D.R.~Brill, J.M.~Cavallo, and J.A.~Isenberg, ``$K$-surfaces in the Schwarzschild space-time and the construction of lattice cosmologies,'' \textit{Journal of Mathematical Physics} {\bf 21}, pp.\ 2789--2796 (1980).

\bibitem{Carter1} 
B.~Carter, ``The complete analytic extension of the Reissner--Nordstr\"om metric in the special case $e^2 = m^2$,'' \textit{Physics Letters} {\bf 21}, pp.\ 423--424 (1966).

\bibitem{Carter2} 
B.~Carter, ``Complete analytic extension of the symmetry axis of Kerr’s solution of Einstein’s equations,'' \textit{Physical Review} {\bf 141}, pp.\ 1242--1247 (1966).

\bibitem{ChandraBook} 
S.~Chandrasekhar, \textit{The Mathematical Theory of Black Holes}, New York: Oxford University Press, 1983. 

\bibitem{ICB}
Y.~Choquet--Bruhat, \textit{General Relativity and Einstein's Equations}, New York: Oxford University Press, 2009.

\bibitem{CrOlSz} 
P.T.~Chru\'sciel, C.R.~\"Olz, and S.J.~Szybka, ``Space-time diagrammatics,'' \textit{Physical Review D} {\bf 86}, id.\ 124041 (2012).

\bibitem{DennisonBaumgarte} 
K.A.~Dennison and T.W.~Baumgarte, ``A simple family of analytical trumpet slices of the Schwarzschild spacetime,'' \textit{Classical and Quantum Gravity} {\bf 31}, id.\ 117001 (2014).

\bibitem{EardleySmarr} 
D.M.~Eardley and L.~Smarr, ``Time functions in numerical relativity: Marginally bound dust collapse,'' \textit{Physical Review D} {\bf 19}, pp.\ 2239--2259 (1979).

\bibitem{Edd}
A.S.~Eddington, ``A comparison of Whitehead's and Einstein's formulae,'' \textit{Nature} {\bf 113}, p.\ 192 (1924).

\bibitem{EWCDST} 
F.~Estabrook, H.~Wahlquist, S.~Christensen, B.~DeWitt, L.~Smarr, and E.~Tsiang, ``Maximally slicing a black hole,'' \textit{Physical Review D} {\bf 7}, pp.\ 2814--2817 (1973).

\bibitem{Fink}
D.~Finkelstein, ``Past-future asymmetry of the gravitational field of a point particle,'' \textit{Physical Review} {\bf 110}, pp.\ 965--967 (1958).

\bibitem{RG} 
R.~Geroch, ``Spinor structure of space-times in general relativity. I,'' \textit{Journal of Mathematical Physics} {\bf 9}, pp.\ 1739--1744 (1968).

\bibitem{RG2} 
R.~Geroch, ``Domain of dependence,'' \textit{Journal of Mathematical Physics} {\bf 11}, pp.\ 437--449 (1970).

\bibitem{GraBri} 
J.C.~Graves and D.R.~Brill, ``Oscillatory character of Reissner--Nordstr\"om metric for an ideal charged wormhole,'' \textit{Physical Review} {\bf 120}, pp.\ 1507--1513 (1960).

\bibitem{GrPo}
J.B.~Griffiths and J.~Podolsk\'y, \textit{Exact Space-times in Einstein's General Relativity}, New York: Cambridge University Press, 2012.

\bibitem{Gull} 
A.~Gullstrand, ``Allgemeine L\"osung des statischen Eink\"orperproblems in der Einsteinschen Gravitationstheorie,'' \textit{Arkiv f\"or Matematik, Astronomi och Fysik} {\bf 16}, pp.\ 1--15 (1921).

\bibitem{HHOBM} 
M.~Hannam, S.~Husa, F.~Ohme, B.~Br\"ugmann, and N.~\'O~Murchadha, ``Wormholes and trumpets: Schwarzschild spacetime for the moving-puncture generation,'' \textit{Physical Review D} {\bf 78}, id.\ 064020 (2008).

\bibitem{KRUS} 
M.D.~Kruskal, ``Maximal extension of Schwarzschild metric,'' \textit{Physical Review} {\bf 119}, pp.\ 1743--1745 (1960).

\bibitem{Lem} 
G.~Lema\^{i}tre, ``L'Univers en expansion,'' \textit{Annales de la Soci\'et\'e Scientifique de Bruxelles} {\bf A53}, pp.\ 51--85 (1933).

\bibitem{Min} 
E.~Minguzzi, ``Lorentzian causality theory,'' \textit{Living Reviews in Relativity} {\bf 22}, id.\ 3 (2019).

\bibitem{MingSan} 
E.~Minguzzi and M.~S\'anchez, ``The causal hierarchy of spacetimes,'' in: H.~Baum and D.~Alekseevsky, \textit{Recent Developments in Pseudo-Riemannian Geometry, ESI Lectures in Mathematics and Physics}, Zurich: European Mathematical Society Press, pp.\ 299--358, 2008.

\bibitem{Nord} 
G.~Nordstr\"om, ``On the energy of the gravitational field in Einstein's theory,'' \textit{Koninklijke Nederlandsche Akademie van Wetenschappen Proceedings} {\bf 20}, pp.\ 1238--1245 (1918).

\bibitem{ONeill} 
B.~O'Neill, \textit{Semi-Riemannian Geometry with Applications to Relativity}, San Diego: Academic Press, 1983.

\bibitem{Pain} 
P.~Painlev\'e, ``La m\'ecanique classique et la th\'eorie de la relativit\'e,'' \textit{Comptes Rendus de l'Acad\'emie des Sciences} {\bf 173}, pp.\ 677--680 (1921).

\bibitem{Penrose0}
R.~Penrose, ``Asymptotic properties of fields and space-times,'' \textit{Physical Review Letters} {\bf 10}, pp.\ 66--68 (1963).

\bibitem{Penrose1}
R.~Penrose, ``Conformal treatment of infinity,'' in: B.~deWitt and C.~deWitt, \textit{Relativity, Groups and Topology}, New York, London: Gordon and Breach, pp.\ 565--584, 1964.

\bibitem{ReimannBruegmann} 
B.~Reimann and B.~Br\"ugmann, ``Maximal slicing for puncture evolutions of Schwarzschild and Reissner--Nordstr\"om black holes,'' \textit{Physical Review D} {\bf 69}, id.\ 044006 (2004).

\bibitem{Reinhart} 
B.L.~Reinhart, ``Maximal foliations of extended Schwarzschild space,'' \textit{Journal of Mathematical Physics} {\bf 14}, p.\ 719 (1973).

\bibitem{Reiss} 
H.~Reissner, ``\"Uber die Eigengravitation des elektrischen Feldes nach der Einsteinschen Theorie,'' \textit{Annalen der Physik} {\bf 355}, pp.\ 106--120 (1916).

\bibitem{Roeken0} 
C.~R\"oken, ``The massive Dirac equation in the Kerr geometry: Separability in Eddington--Finkelstein-type coordinates and asymptotics,'' General Relativity and Gravitation {\bf 49}, id.\ 39 (2017).

\bibitem{Roeken} 
C.~R\"oken, ``Kerr isolated horizons in Ashtekar and Ashtekar--Barbero connection variables,'' General Relativity and Gravitation {\bf 49}, id.\ 114 (2017).

\bibitem{SchAgu} 
J.C.~Schindler and A.~Aguirre, ``Algorithms for the explicit computation of Penrose diagrams,'' \textit{Classical and Quantum Gravity} {\bf 35}, id.\ 105019 (2018).

\bibitem{Schwarzschild} 
K.~Schwarzschild, ``\"Uber das Gravitationsfeld eines Massenpunktes nach der Einsteinschen Theorie,'' \textit{Sitzungsberichte der K\"oniglich Preussischen Akademie der Wissenschaften}, pp.\ 189--196 (1916).

\bibitem{SmarrYork} 
L.~Smarr and J.W.~York Jr., ``Kinematical conditions in the construction of spacetime,'' \textit{Physical Review D} {\bf 17}, pp.\ 2529--2551 (1978).

\bibitem{Step}
H.~Stephani, D.~Kramer, M.~MacCallum, C.~Hoenselaers, and E.~Herlt, \textit{Exact Solutions of Einstein’s Field Equations}, New York: Cambridge University Press, 2009.

\bibitem{Szekeres}
G.~Szekeres, ``On the singularities of a Riemannian manifold,'' \textit{Publicationes Mathematicae Debrecen} {\bf 7}, pp.\ 285--301 (1960).

\bibitem{Wald} 
R.M.~Wald, \textit{General Relativity}, Chicago: University of Chicago Press, 1984.

\bibitem{WaldIyer} 
R.M.~Wald and V.~Iyer, ``Trapped surfaces in the Schwarzschild geometry and cosmic censorship,'' \textit{Physical Review D} {\bf 44}, pp.\ 3719--3722 (1991).

\bibitem{Walker} 
M.~Walker, ``Block diagrams and the extension of timelike two-surfaces,'' \textit{Journal of Mathematical Physics} {\bf 11}, pp.\ 2280--2286 (1970).

\end{thebibliography}
\end{document}